\documentclass[preprint,12pt]{elsarticle}
\usepackage{amsmath,amssymb,amsthm,mathtools}
\usepackage{subcaption} 
\usepackage{float}
\usepackage{booktabs}
\usepackage{siunitx}
\usepackage{microtype}    
\usepackage[
pdftex,
pdftitle={An analytical approach to calculating stationary PDFs for reflected random walks with an application to BESS-based ramp-rate control},
pdfauthor={Carlos Colchero, Diego Jiménez-Arreguín, Álvaro Herrera, Jorge E. Pérez-García, Oliver Probst},
colorlinks=true,
linkcolor=blue,
citecolor=blue,
urlcolor=black,
]{hyperref}
\usepackage[nameinlink,noabbrev]{cleveref} 
\usepackage{adjustbox}    
\usepackage{url}          
\usepackage{makecell}
\usepackage{parskip}
\theoremstyle{plain}      
\newtheorem{theorem}{Theorem}[section]       
\newtheorem{assumption}[theorem]{Assumption} 

\newtheorem{lemma}[theorem]{Lemma}

\theoremstyle{definition} 

\theoremstyle{remark}  
\newtheorem{remark}[theorem]{Remark}

\journal{Applied Numerical Mathematics}

\begin{document}
\begin{frontmatter}

\title{An analytical approach to calculating stationary PDFs for reflected random walks with an application to BESS-based ramp-rate control}

\author[inst1]{Carlos Colchero}\fnref{fn1}
\author[inst1]{Diego Jiménez-Arreguín}
\author[inst1]{Álvaro Herrera}
\author[inst1]{\\ Jorge E. Pérez-García\corref{cor1}}
\author[inst1]{Oliver Probst\corref{cor1}}
\cortext[cor1]{Corresponding author.}
\ead{oprobst@tec.mx}

\affiliation[inst1]{organization={School of Engineering and Sciences, Tecnológico de Monterrey},%
  city={\\ Monterrey}, state={Nuevo León}, postcode={64700}, country={México}}

\begin{abstract}
A Wiener–Hopf–type integral equation for the stationary PDF of a reflected random walk is derived rigorously based on modern probability theory, and an application to battery energy storage systems (BESS), specifically the sizing of the inverter, is discussed in depth. The methodological steps include the construction of a Markov kernel, the derivation of a Fredholm integral equation of the second kind for the BESS power's PDF, and an analytical solution of the equation based on a Neumann series. The analytical results were compared against numerical solutions obtained with the Nyström method, as well as against the results of an algorithmic simulation using simulated input time series. The use of truncated versions of the analytic solution allows for the construction of simplified design rules for the power systems practitioner. General insights into inverter sizing criteria of storage systems for ramp-rate control of Variable Renewable Energy (VRE) sources such as wind and solar are provided.  
\end{abstract}

\begin{keyword}
Reflected random walk \sep Wiener–Hopf integral equation \sep Markov kernel \sep
Modern Probability Theory \sep Numerical integral equations \sep Battery Energy Storage Systems (BESS) \sep Ramp-rate control
\end{keyword}

\end{frontmatter}

\section{Introduction}\label{sec:intro}

Wind and solar photovoltaic (PV) power plants, also known as variable renewable energy (VRE) sources, are now by far the cheapest electricity generation technologies worldwide \cite{lazard_lcoeplus_2025,owid_lcoe_2025}. In 2024, 92.5\% of all new power plants were renewable \cite{IRENA2025}. The specific challenges posed by the integration of wind and solar into electricity grids can roughly be divided into two categories: (1) challenges related to the fact that wind and solar plants are based on inverters using power electronics, i.e., they are inverter-based resources (IBR), and (2) challenges related to the variable nature of the wind and solar resource. Though variability in itself is not a new phenomenon in electrical power systems, given that the demand varies on a number of different time scales, it typically increases when penetration levels of wind and solar become high. Since a full decarbonization of the power system will evidently require very high levels of VRE technologies, a proper approach to handling this additional variability is required. 

VRE variability manifests itself at different time scales, relevant to different power systems functions. In countries with already high portions of VRE, such as Germany, where the 2024 fraction of VRE was 43\% and the total renewable energy fraction was 57\% \cite{destatis_bruttostrom_2024}, one of the main concerns is often \emph{resource adequacy}, i.e., the capability of the system of handling the load in times of very low simultaneous production levels of wind and solar ("dunkelflaute") \cite{BECKBANG2026114910}. In other systems, such as in California, the main challenge is to meet the daily evening up-ramp requirements for non-solar resources \cite{HOBBS2022100024}, in response to the decaying production from solar PV. In most other systems, however, where VRE penetration is, say, some 15\% or less, the main challenge is (secondary) frequency regulation \cite{ZHANG2025126700}. Grid code requirements related to tolerable frequency excursions are then often translated into maximum tolerable VRE ramps, both upward and downward. Upward ramps can be handled directly by the VRE plant controller by an appropriate curtailment strategy. Downward ramps, on the other hand, require an additional source of power to make up for an inadmissible drop in power from one time step to the other, possibly extending over several time steps. Typically, a battery energy storage system (BESS) is proposed to provide such a ramp-rate control.

\medskip

Ramp-rate control by BES systems has been studied by a number of authors. Marcos et al. \cite{MARCOS201428} proposed their \emph{worst fluctuation} model for a  solar PV power plant, based on the empirical observation that the largest negative ramp was given by an exponential decay from 90\% to 10\% with a time constant $\tau$ given by $\tau = al+b$, with $l$ being a characteristic length of the solar PV array and $a$ are $b$ site-specific fit parameters. The required BESS capacity could then be determined easily for any given tolerable ramp rate. Sukamar et al. \cite{SUKUMAR2018218} reviewed a number of ramp-rate controlling strategies for solar PV plants, classifying the available methods into (a) smoothing, (b) filter-based, and (c) direct ramp-rate control methods. They concluded that methods based on smoothing techniques such as moving-average weighting as well as filtering techniques tend to cause significant battery activity even during periods of low variation. The authors conclude that this effect leads to increased degradation but do not discuss the implications on BESS sizing. Direct ramp-rate control techniques were found to significantly reduce over-utilization of the BESS and, consequently, degradation. Liu et al. \cite{LIU2025117421} implemented a direct ramp-rate control method for solar PV plants, focusing on the alleviation of battery degradation. The authors demonstrated the effectiveness of their methods. However, sizing considerations and possible trade-offs between ramp-rate compliance and aging were not addressed. Montalá et al. \cite{MONTALAPALAU2025114631} also studied the degradation of BES systems designed for ramp rate control and conducted a techno-economic assessment. Most of the discussion was based on sample time series of a BESS working under different market conditions. The authors identify some sizing considerations in the conclusions section but otherwise provide no design considerations beyond the observations of their specific cases.

Though valuable insights have been gained by the works cited, one element is notoriously absent from the relevant scientific literature: a consistent and sufficiently general methodology for the sizing of BESS systems. Most published studies have a mainly algorithmic perspective and use VRE data for illustration purposes, rather than studying the problem of BESS sizing in a systematic manner. One of the few exceptions is the work by Fregosi et al. \cite{Fregosi2023AnAO}, who studied the ramp-rate compliance of a $P_{\mathrm{max}}=100$ MW solar PV power plant as a function of the tolerable ramp rate and the power and energy capacities of the BESS using an algorithmic approach. Ramp rate limits of 2.5\% to 10\% $P_{\mathrm{max}}$ per 10-min interval were considered in the simulations. One important finding was that in their base case (10\% $P_{\mathrm{max}}$ tolerable power output change in 10-min intervals), a BESS power capacity $b_0$ of about 20\% of $P_0$ and an energy capacity of $0.1 \mathrm{h}\times P_{\mathrm{max}} = 0.1 \mathrm{h}\times b_0/20\% = 0.5 \mathrm{h}\times b_0$ were enough to mitigate over 97\% of all ramp-rate violations. If forecasting, assumed to be ideal, were considered in the BESS control, the compliance for this BESS setup increased to about 99\%. While these results and the systematic approach by the authors provided interesting insights, it is difficult to deduce general sizing recommendations from them, given the case study nature of the work; moreover, the input data were not revealed. A different approach was taken by Probst \cite{Probst_2025}, who used classical probability theory to calculate lower limits for the required power $b_0$ and energy capacity $E_0$ of storage systems for ramp-rate control. Analytical results for $b_0$ and $E_0$ were obtained under the assumption of the VRE output variations following both simple and generalized Laplace distributions. The work assumed the BESS power to depend only on the fluctuations of the primary (VRE) power source, i.e., it neglected correlations between consecutive BESS power values. While this is asymptotically exact for large ramp-rate limits, it underestimates the required BESS power capacity at smaller, i.e., stricter, ramp-rate limits.

\medskip

In the present work, we extend the systematic analytical approach proposed by Probst \cite{Probst_2025} by explicitly considering correlations between consecutive BESS power values. We will focus on the sizing of the BESS inverter, i.e., the \emph{power} capacity of the system. A solution to the energy capacity sizing problem will be presented in a companion paper. The basic idea is to derive a probability density function (PDF) for the battery power based on the knowledge of the PDF of the VRE fluctuations. Then, the required BESS inverter capacity can be calculated as a suitable $p$-value of the battery power PDF. The ramp control model considered ensures strict compliance with a given yet arbitrary, ramp rate limit. As will be shown below, this leads to a BESS power process that can be characterized as a reflected random walk possessing the Markov property, i.e., the BESS power at a time step $t_n$ only depends on its value at the previous time step $t_{n-1}$.

\medskip

While reflected random walks have not been used previously in the modeling of BESS processes, they are quite common in the field of operations research, particularly in the theory of queues \cite{kleinrock1975queueing}. An early and still influential work on the theory of queues with a single server was published by Lindley in 1952 \cite{Lindley_1952}. The author derived an integral equation for the distribution of the waiting times of customers in a single queue based on the distribution of a random variable given by the difference between the service time and the inter-arrival time between customers. A number of extensions to and applications of Lindley's work have been published, including the extensions to higher-order processes by Karpelevich et al. \cite{KARPELEVICH199465}, the use of selective function for the modeling of the kernel of the Lindley equation by Kartachevski \cite{Kartashevskiy_2018}, and an extension to branching random walks by Biggins \cite{BIGGINS1998105}. An analytical solution of the original Lindley equation for the case of polynomial distributions and an approximation for the case of finite support were derived by Vlasiou and Adan \cite{VLASIOU2007105}. Very recently, Lucrezia et al. \cite{Lucrezia_Lindley_2025} proposed a solution of the Lindley equation for Laplace jumps in terms of an infinite expansion and applied their formula in the context of the CUSUM method for the detection of changes in a distribution. 

\medskip

In the present manuscript, we generalize Lindley's work to arbitrary Markov processes using modern probability theory \cite{Durrett2019}, specifically the underlying concepts of measure theory, apply the results to BESS power processes for ramp control and obtain a rigorous analytical solution for the BESS power probability density function (PDF) in the case of negative ramp control. The solution is obtained in terms of a Neumann series, allowing for a very compact formulation of the solution. The general concepts, the nomenclature, and a representative BESS process for ramp control with a strict enforcement of the tolerable slope are presented in section \ref{subsec:general_model}. The required fundamentals of measure theory are briefly reviewed in section \ref{subsec:measure_theory}. An update law or recursion in discrete time for the BESS power cumulative density function (CDF) is derived, which is then used as a starting point for the derivation of both the absolutely continuous and the "atomic" or point mass part of the probability density function (PDF) (subsection \ref{subsec:PDF}). The Radon-Nikodym theorem is used to obtain update laws for the absolutely continuous and atomic parts in the form of coupled Fredholm integral equations of the second kind. While these equations have general validity for Markov processes, the rest of the manuscript then focuses on a BESS process for negative ramp control with strict compliance; this process is described in subsection \ref{subsec:Model_negative_ramps}. Update laws for the time-varying PDFs for this case are obtained in subsection \ref{subsec:CDF_PDF_negative_ramp_control} and are then used to derive integral equations of the Wiener-Hopf type for the stationary PDF and CDF of the BESS power in subsection \ref{subsec:stationary_law}. A rigorous analytical solution of the Wiener-Hopf equation for the BESS power PDF in terms of Neumann series for the case of Laplace-distributed primary power fluctuations is described in subsection \ref{subsec:analytical_solution}; the full details of the derivation are provided in \ref{app:coeff}. The numerical solution of the Wiener-Hopf-type equation using the Nyström method, used for validation, is not considered an original contribution of this work and is therefore relegated to  \ref{app: Nystrom}. A brief discussion of the setup of the algorithmic solution used for validation of the analytical results obtained in this work is provided in subsection \ref{subsec:algorithmic_solution}. The results presented in section \ref{sec:results_discussion} include a comparison of the analytical results with those obtained from the numerical solution of the Wiener-Hopf equation and the results of an algorithmic implementation of the problem. The run times of the analytical and the numerical solutions are compared. The use of a truncated version of the analytical solution for a first estimation of the BESS power capacity is demonstrated. A summary with conclusions, including an outlook to future work, is presented in section \ref{sec:summary_conclusions}.

To the best knowledge of the authors, no mathematically rigorous solution to the BESS sizing problem considering correlations between consecutive BESS power values has been published so far, and very few attempts to systematically assess the BESS requirements for ramp control have been made in literature. We therefore believe that this work fills a significant void in the scientific literature on BESS sizing and design.

\section{Methodology}

\subsection{General BESS model for ramp control}\label{subsec:general_model}
The main objective of this work is the probabilistic description of the battery storage system's response to changes in VRE output power, given an arbitrary but fixed ramp-rate limit. Let $(P_n)_{n\ge
0}$ [MW] be a sequence of discrete VRE output power measurements. The time step between consecutive measurements, $\Delta t = t_{n+1}-t_n$, will be assumed to be constant. The results obtained in this work are independent of the value of $\Delta t$; however, for practical applications, it may be useful to think of $\Delta t$ of being of the order of a few seconds to about one minute for primary frequency regulation (PFR) and in the range between 1 and a few minutes for secondary frequency regulation (SFR). A typical time step is $\Delta t = 1$ min. The changes in VRE output power [MW], also referred to as the primary power changes, can be written as:
\begin{align}
    Y_n = P_n-P_{n-1}.
    \label{DEF:CHANGES}
\end{align}

Ramp-rate limits $\alpha_{\uparrow}>0$ and $\alpha_{\downarrow}>0$, both measured in MW/min, may be established for individual power plants for up- and downramps, respectively, ideally based on a rigorous analysis of frequency variations and the system bias \cite{Ayaz2025}. A time-step independent formulation can be obtained by introducing the maximum tolerable step change $a_{\uparrow,\downarrow}=\alpha_{\uparrow,\downarrow} \Delta t$ [MW] in grid power, i.e., the power injected into the grid by the power plant, at any given time. The grid power $R_n$ is defined as:
\begin{equation}
    R_n = P_n+B_n,
    \label{DEF:GRID}
\end{equation}
where $B_n$ is the power of an energy storage system, typically a BESS, at time step $t_n$. BESS action is required if the change between the current power value $P_n$ and the grid power value at the previous interval $R_{n-1}$ is more negative than the tolerable slope $-a_\downarrow$ or greater than the positive step change $a_\uparrow$. It is easy to show that the required battery power is then given by
\begin{equation}
    B_{n+1}=\begin{cases}
    B_n + (a_\uparrow - Y_n), &\Delta_n > a_\uparrow\\ 
    0, & -a_\downarrow \le \Delta_n \le a_\uparrow\\
    B_n - (a_\downarrow + Y_n), & \Delta_n < -a_\downarrow,
    \end{cases}
    \label{eq:BESS_recursion}
\end{equation}
where $\Delta_n = P_n-R_{n-1}$. In this work, we will assume that the power changes $Y_n$ are independent and identically distributed (IID), i.e., a possible autocorrelation between the primary power increments will be neglected. This is generally a good approximation for time intervals of 1 min or shorter \cite{Probst_2025}.  In such cases, the evolution of the BESS power can be described as a random walk driven by stochastic increments $Y_n$. In the following, we will show that this information is enough to relate the probability density function (PDF) of the battery power with the PDF of the primary power fluctuations $Y_n$. In order to do so, a brief review of some basic concepts of modern probability will be provided first.

\subsection{Elements of modern probability theory}\label{subsec:measure_theory}

From the inspection of Eq.~\eqref{eq:BESS_recursion}, it can be seen that the BESS process can be characterized as a Markov process. We work on a probability space $(\Omega, \mathcal{F},\mathbb{P})$ capable of supporting an infinite sequence of random variables $(Y_n)$ with this CDF. Here, $\Omega$ is the so-called sample or outcome space, $\mathcal{F}$ is the event space, with an event being a subset of $\Omega$, and $\mathbb{P}$ a probability function assigning a value between 0 and 1 to each event. 

The increment law $F_Y$ induces a probability measure $\nu$ on $(\mathbb{R},\mathcal{B}(\mathbb{R}))$, defined by
\begin{equation}
    \nu((-\infty, y])=F_Y(y),
\end{equation}
where $\mathcal{B}(\mathbb{R})$ is the Borel sigma algebra on $\mathbb{R}$. Since the increment sequence is IID, the joint law of the entire sequence is given by the product measure $\mathbb{P}=\nu^{\otimes \mathbb N}$ so that each coordinate map $Y_n(\omega)=\omega_n$ is an IID random variable with law $\nu$ for $\omega\in\Omega$.

The battery state sequence $(B_n)_{n\ge0}$ is a family of random variables defined recursively from the increments by a deterministic dispatch rule
\begin{equation}
    B_{n+1}=\Phi (B_n, Y_{n+1}), \quad B_0=b_0,
\end{equation}
where $\Phi: S \times \mathbb R \rightarrow  S$ represents the battery's update mechanism. Equation (\ref{eq:BESS_recursion}) is an example of $\Phi$. $S$ is a fixed measurable state space, for example, $S=[0,\infty)$ in models providing negative ramp control without an upper battery limit or $S=[0, B_\mathrm{max}]$ for finite-capacity batteries. Therefore, we assume that the state space does not evolve with time and that the process remains well-defined:
\begin{equation}
    \Phi(s,y)\in S \quad \text{for all } (s,y)\in S\times \mathbb R.
    \label{HOMO_ASSUM}
\end{equation}

Let $\mathcal{F}_n=\sigma(Y_1, \cdots, Y_n)\subset \mathcal F$ denote the sigma algebra generated by the first $n$ power increments. Intuitively, $\mathcal F_n$ represents all information available up to time $n$. Because $B_n$ is computed from $(Y_n)_{n\ge1}$, it is a $\mathcal{F}_n$-measurable random variable in the same probability space $(\Omega, \mathcal F, \mathbb P)$. Every random variable induces a pushforward measure (law) on its state space so that the probability for the battery power to be in a certain outcome set $A$ ("battery law") is given by
\begin{align}
    \Pi_n(A):=\mathbb P(B_n\in A)=\mathbb P(B_n^{-1}(A)), \quad A\in\mathcal B(S).
\end{align}
where $\mathcal B(S)$ denotes the Borel sigma-algebra on $S$. When $S\subseteq \mathbb R$, we may describe the law through a cumulative distribution function:
\begin{equation}
    G_n(b)=\Pi_n((-\infty,b])=\mathbb P \{ B_n \le b\}, \quad b\in S.
\end{equation}

The pair $(\Pi_n, G_n)$ characterizes the statistical state of the battery at time $n$. Since $B_{n+1}$ depends only on $B_n$ and the new increment $Y_{n+1}$, and since $Y_{n+1}$ is independent of the past, $\mathcal F_n$, the sequence $(B_n)_{n\ge0}$ satisfies the Markov property:
\begin{align}
    \mathbb{P}(B_{n+1}\in A | \mathcal{F}_n)=\mathbb{P}(B_{n+1}\in A |B_n), \quad A\in \mathcal{B}(S).
\end{align}

Hence, recalling the assumption in Eq.~\eqref{HOMO_ASSUM}, $(B_n)$ is a time-homogeneous Markov chain on the state space $S$. Therefore, there exists a Markov transition kernel
\begin{equation}
    \kappa:S\times \mathcal B (S) \rightarrow [0,1], \quad \kappa(s,A)=\mathbb P\{\Phi(s,Y)\in A\}.
    \label{KERNEL}
\end{equation}

Under regular conditional probability, we may express the Markov kernel as:
\begin{equation}
    \mathbb P(B_{n+1}\in A \mid \sigma(B_n))=\kappa(B_n,A)\quad \text{a.s.}
\end{equation}
so that for each fixed $A\in \mathcal B(S)$, the map $s\rightarrow \kappa(s,A)$ is Borel-measurable; for each fixed $s\in S$, the map $A\rightarrow \kappa(s,A)$ is a probability measure. 

The Markov kernel defines an integral operator on finite measures, so that for $A\in\mathcal B(S)$,
\begin{equation}
    (\mu \kappa)(A):= \int_S \kappa(s,A)\ \mu(\mathrm{d}s).
    \label{KERNEL:UPDATE_MEASURES}
\end{equation}

\begin{lemma}
    Let $(S, \mathcal{B}(S))$ be a measurable state space and $\kappa:S\times \mathcal{B}(S)\rightarrow [0,1]$ a Markov transition kernel. Then, for every Borel set $A\subseteq S$,
    \begin{align}
        \Pi_{n+1}(A)=(\Pi_n\kappa)(A)=\int_S \kappa(s,A)\ \Pi_n(\mathrm{d}s).
        \label{eq:BESS_law}
    \end{align}

    When the state space $S\subseteq \mathbb R$, we may describe this update law in terms of the cumulative distribution function $G_n(b)=\Pi_n((-\infty, b])$. Defining $F_\kappa (b|s):=\kappa(s,(-\infty, b])$, the CDF satisfies
    \begin{equation}
        G_{n+1}(b)=\int_S F_\kappa (b|s)\ \Pi_n(\mathrm{d}s).
        \label{eq:BESS_law_2}
    \end{equation}
    \label{LEMMA UPDATE}
\end{lemma}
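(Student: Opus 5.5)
The plan is to prove \eqref{eq:BESS_law} by conditioning on $\mathcal F_n$ and applying the tower property, and then to get \eqref{eq:BESS_law_2} by specializing to half-lines. Fix $A\in\mathcal B(S)$ and write
\begin{equation*}
\Pi_{n+1}(A)=\mathbb P(B_{n+1}\in A)=\mathbb E\bigl[\mathbb P(B_{n+1}\in A\mid \mathcal F_n)\bigr].
\end{equation*}

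The first step identifies the conditional probability explicitly. Since $B_{n+1}=\Phi(B_n,Y_{n+1})$, the variable $B_n$ is $\mathcal F_n$-measurable, and $Y_{n+1}$ is independent of $\mathcal F_n$ (product structure $\mathbb P=\nu^{\otimes\mathbb N}$), I will invoke the standard freezing (substitution) lemma. This lemma applies to a jointly measurable bounded $h(s,y)$, an $\mathcal G$-measurable $X$, and a $Y$ independent of $\mathcal G$, and states that $\mathbb E[h(X,Y)\mid\mathcal G]=g(X)$ a.s. with $g(s)=\mathbb E[h(s,Y)]$. I take $h(s,y)=\mathbf 1_A(\Phi(s,y))$, which is measurable because $\Phi$ is measurable and maps into $S$ by \eqref{HOMO_ASSUM}. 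The lemma then gives
\begin{equation*}
\mathbb P(B_{n+1}\in A\mid\mathcal F_n)=\kappa(B_n,A)\quad\text{a.s.},
\end{equation*}
with $\kappa$ as in \eqref{KERNEL}. This step also makes precise the displayed Markov kernel identity stated before the lemma.

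To prove the freezing lemma I would verify it first on product indicators $h=\mathbf 1_{C\times D}$, where it reduces to independence. I would then extend to all bounded measurable $h$ by a $\pi$–$\lambda$ (monotone class) argument. Along the way, Fubini/Tonelli shows that $s\mapsto\kappa(s,A)$ is Borel measurable.

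The second step takes expectations and uses the change-of-variables formula for pushforward measures. Because $\kappa(\cdot,A)$ is a bounded Borel function and $\Pi_n$ is the law of $B_n$,
\begin{equation*}
\mathbb E[\kappa(B_n,A)]=\int_S\kappa(s,A)\,\Pi_n(\mathrm ds)=(\Pi_n\kappa)(A),
\end{equation*}
which is \eqref{eq:BESS_law} by \eqref{KERNEL:UPDATE_MEASURES}.

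For \eqref{eq:BESS_law_2}, take $S\subseteq\mathbb R$ and $A=(-\infty,b]\cap S\in\mathcal B(S)$. Then $\Pi_{n+1}(A)=G_{n+1}(b)$ and $\kappa(s,A)=F_\kappa(b\mid s)$ by definition, so the CDF identity follows directly.

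The main obstacle is the measurability bookkeeping in the first step: justifying that $\kappa(\cdot,A)$ is measurable and that the conditional expectation equals $\kappa(B_n,A)$ for every Borel $A$, not just for intervals. The monotone class argument handles this. Everything after that is a direct use of the tower property and the definition of the law.
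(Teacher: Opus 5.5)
Your proposal is correct and follows the same route as the paper's proof: the tower property, the identification $\mathbb P(B_{n+1}\in A\mid\mathcal F_n)=\kappa(B_n,A)$ a.s., the change-of-variables formula for the law $\Pi_n$ of $B_n$, and then specialization to half-lines $(-\infty,b]\cap S$. The only difference is that the paper takes the identification step directly from the Markov property asserted before the lemma, whereas you derive it from the freezing lemma and the independence of $Y_{n+1}$ from $\mathcal F_n$, which makes your argument more self-contained.
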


The proof of this lemma is shown in \ref{app:proofs}. While an expression for the CDF provides a general treatment for any dispatch recursion, it is often the case that the CDF expression turns into an integrodifferential equation. For that reason, we shift our analysis to the probability density function for the cases when it exists.

\subsection{Probability Density Function} \label{subsec:PDF}

Let $(S,\mathcal B(S))$ denote the fixed state space of the battery process, where, $S\subseteq \mathbb R$ unless otherwise specified. We assume that all relevant measures, such as the given CDF-induced $\nu$, have no singular continuous component. Therefore, each measure is expressible as the sum of an atomic part (in the sense of measure theory) and an absolutely continuous part with respect to the Lebesgue measure. Under this assumption, we may construct a single dominating reference measure and define Radon-Nikodym (RN) derivatives for measures on the given state space.

Let $C\subset S$ denote a countable set of atomic locations (i.e., locations with a point measure), all constant in time. These sorts of atomic portions are generated by bounds on the state space, such as $C=\{0\}$ in the negative ramp control model with an upper battery limit or $C=\{0,B_{max}\}$ in a finite capacity model. We define the dominating measure
\begin{equation}
    \rho:=\lambda _{S \backslash C}+\sum_{c\in C}\delta _c,
\end{equation}
where $\lambda$ denotes the Lebesgue measure restricted to $S \backslash C$, i.e., the continuous part of $S$, and $\delta_c$ denotes the Dirac measure on the atom, $c$. 

\begin{remark}
    Every measure in the family $\{\kappa(s,\cdot), \Pi_n(\cdot)\}$, i.e., the Markov transition kernel and the probability of encountering the BESS power in a certain subset of the outcome space, can be split into an absolutely continuous and an atomic portion.These measures are dominated by $\rho$, i.e. $\Pi_n\ll \rho$ and $\kappa(s,\cdot)\ll \rho$ for every $s\in S$.
    \label{REMARKSPLIT}
\end{remark}

Since $\rho\gg \Pi_n$, by the Lebesgue Decomposition Theorem, each battery law (i.e., the probability of encountering the BESS power in the outcome subset $A$) can be written as the following decomposition into an atomic part (again, in terms of measure theory) and an absolutely continuous part:
\begin{align}
    \Pi_n(A)=\sum_{c\in C} p_{c,n}\delta_c(A) + \int_{A\cap (S\backslash C)}g_n(s)\ \mathrm{d}s, \quad A\in \mathcal B (S),
\end{align}
where we define
\begin{equation}
    p_{c,n}:=\Pi_n(\{c\}), \quad g_n(s):=\frac{\mathrm{d}(\Pi_n|_{S\setminus C})}{\mathrm{d}\lambda}(s), \quad s\in S\setminus C.
\end{equation}

Since $\rho\gg\Pi_n$, the conditions for calculating the RN derivative of the BESS power probability are fulfilled and the latter can be obtained as
\begin{equation}
    \frac{\mathrm{d}\Pi_n}{\mathrm{d}\rho}(s)=\sum_{c\in C}p_{c,n}\textbf{1}_{\{c\}}(s)+g_n(s)\textbf{1}_{S\backslash C}(s),
    \label{RNLAW}
\end{equation}
where $\textbf{1}_{\{c\}}$ and $\textbf{1}_{S\backslash C}(s)$ are indicator functions, defined by $\textbf{1}_{A}(x)=1$ if $x\in A$ and $\textbf{1}_{A}(x)=0$ if $x\notin A$, for any arbitrary subset $A$ of the outcome space. Since also $\rho\gg\kappa$, the Markov kernel can also be decomposed into an atomic and an absolutely continuous part:
\begin{equation}
    \kappa(s,A)=\sum_{c\in C}a(c|s)\delta_c(A)+\int_{A\cap (S\backslash C)}k(b|s)\ \mathrm{d}b,
    \label{eq:Kernel_decomposition}
\end{equation}
where we define
\begin{equation}
    a(c|s):=\kappa(s,\{c\}), \quad k(b|s):=\frac{\mathrm{d}\kappa(s, \cdot)}{\mathrm{d}\lambda}(b) \text{ on } S\backslash C.
    \label{eq:Kernel_decomposition_defs}
\end{equation}

The Radon-Nikodym derivative with respect to $\rho$ can then be calculated as:
\begin{align}
    \frac{\mathrm{d}\kappa(s,\cdot)}{\mathrm{d}\rho}(b)=\sum_{c\in C}a(c|s)\textbf{1}_{\{c\}} (b) + k(b|s)\textbf{1}_{S\backslash C}(b).
    \label{RNKERNEL}
\end{align}

After these preparations we are now in the position to state expressions for the probability density function (PDF) that explicitly account for the split between the atomic and the absolutely continuous parts. We first observe that the expression for the CDF of the BESS power (Lemma \ref{LEMMA UPDATE}, with equations (\ref{eq:BESS_law}) and (\ref{eq:BESS_law_2})) can be derived in a Radon-Nikodym sense to obtain:
\begin{equation}
    \frac{\mathrm{d}\Pi_{n+1}}{\mathrm{d}\rho}(b)
= \int_S \frac{\mathrm{d}\kappa(s,\cdot)}{\mathrm{d}\rho}(b)\ \Pi_n(\mathrm{d}s)
= \int_S \frac{\mathrm{d}\kappa(s,\cdot)}{\mathrm{d}\rho}(b)\ \frac{\mathrm{d}\Pi_n}{\mathrm{d}\rho}(s)\ \rho(\mathrm{d}s).
\label{eq:RN_update}
\end{equation}
where $\mathrm{d}\kappa(s,\cdot)/\mathrm{d}\rho$ is the RN derivative of the Markov transition kernel. Expressions for the evolution of both the atomic and the absolutely continuous parts of the PDF can be obtained by inserting Eqs.~\eqref{eq:Kernel_decomposition}, (\ref{eq:Kernel_decomposition_defs}) and (\ref{RNKERNEL}) into Eq.~(\ref{eq:RN_update}). Observing the indicator functions, the following update laws are obtained:
\begin{align}
    p_{c,n+1}=&\sum_{c'\in C}p_{c',n}a(c|c') + \int_{S\backslash C} a(c|s) g_n(s)\ \mathrm{d}s, \quad c\in C, \\
    g_{n+1}(b)=&\sum_{c'\in C}p_{c',n}k(b|c')+\int_{S\backslash C} k(b|s)g_n(s)\ \mathrm{d}s, \quad b\in S\backslash C.
    \label{mastereq}
\end{align}

These equations remain valid for any recursion $B_{n+1}=\Phi(B_n, Y_{n+1})$ as long as the dynamics preserve a fixed countable atomic set $C$.

\subsection{Model for controlling negative ramps} \label{subsec:Model_negative_ramps}

In the rest of this paper, we will focus on the case of negative ramp control. This case is usually one of the most practical interest, since it directly attends to the system operator's interest in managing (relatively) sudden decreases in primary VRE power for the sake of frequency control. It will be argued that no fixed upper limit for the battery power has to be considered; rather, the required battery power will emerge naturally as a suitable $p$-value (say, P99) of the stationary BESS power distribution.
In this case, the more general recursion of equation (\ref{eq:BESS_recursion}) reduces to:
\begin{equation}
    B_{n+1}=\max(-\Delta_n-a,0).
    \label{DEF:BESS}
\end{equation}

It is easy to see that the change in grid power $Z_n=R_n-R_{n-1}$ will remain in compliance with the tolerable slope (i.e., $Z_n\ge -a$) if the BESS complies with equation \ref{DEF:BESS}. It should be noted that in the equations above we have assumed that the corrective action provided by the BESS is instantaneous, i.e., the response time of the controller can be neglected compared to the duration of the time interval $\Delta t$. For a typical time scale of $\Delta t=1$\ min this is generally a good approximation. The effect of a finite response time of the BESS controller will be assessed in future work. 

We may rewrite Eq.~\eqref{DEF:BESS} by noting that $\Delta_n=Y_n-B_{n-1}$, yielding
\begin{equation}
    B_{n+1}=\max(B_n-Y_{n+1}-a, 0).
\end{equation}

We assume $(Y_n)_{n\ge 1}$ is IID, and that the associated CDF is absolutely continuous. We define the effective increments as $X_n:=-Y_n-a$, simplifying the recursion to the well-known reflected random walk (also known as a Lindley recursion \cite{Lindley_1952}):
\begin{equation}
    B_{n+1}=\Phi(B_n, X_{n+1})=\max(0,B_n+X_{n+1}).
\end{equation}

The mean of $X_n$, referred to as the drift, is given by $\mathbb{E}[X_N]$. The name suggests that the average behavior of $X_n$ determines the movement of the reflected random walk, which will be fundamental for the derivation of stationary distributions. For that purpose, the following assumptions must hold:
\begin{assumption}
The sequence $(X_n)_{n\ge1}$ of effective VRE power increments is IID with negative drift.
\label{MOD:IID_ASSUM}
\end{assumption}
In the present case, it can be seen that this condition is fulfilled if the plausible assumption of primary power fluctuations with zero mean (i.e., $\mathbb{E}[Y_N]=0$) is met. Then, $\mathbb{E}[X_N] = -a <0$, as required.

\begin{assumption}
The BESS can only discharge or turn off, and it does so instantly.
\label{MOD:DISCHARGE_ASSUM}
\end{assumption}

 This assumption is implicit in equation (\ref{DEF:BESS}), if we further assume that this equation reflects the full dynamics of the system and is therefore stated for completeness only.
 
\begin{assumption}
    The BESS has infinite capacity, allowing it to correct all ramp-rate incompliances given a tolerable slope $\alpha=a/\Delta t$ of arbitrary value.
    \label{MOD:CAPACITY_ASSUM}    
\end{assumption}

\subsection{The CDF and PDF of the negative ramp control model} \label{subsec:CDF_PDF_negative_ramp_control}

In the case of the model for negative ramp-rate control, the Markov kernel is as given by Eq.~(1),ref{KERNEL}
\begin{align}
    \kappa(s,A)=\mathbb P\{\Phi(s,X) \in A\}=\mathbb P\{\max(0,s+X)\in A\},
\end{align}
where the subscript $n$ was removed from $X$ due to the IID assumption. We may find the cumulative probability of encountering the BESS power $b$, conditioned on the previous value $s$, i.e., $F_\kappa$, directly as
\begin{equation}
    F_\kappa(b|s):=\kappa(s,(-\infty, b]) = \mathbb P\{\max(0,s+X)\le b\}.
    \label{eq:conditional_CDF}
\end{equation}

Equation~\eqref{eq:conditional_CDF} can be evaluated in a piecewise way:
\begin{enumerate}
    \item If $b<0$, then $F_\kappa(b|s)=0$, since $\Phi(s,X)=\max(0,s+X)\ge0$.
    \item If $b=0$, we can argue the following way: Since the battery power is always zero or positive, we have $s\ge0$ and $\Phi(s,X)\ge0$, and therefore
    \begin{align}
    \nonumber
        \{\Phi(s,X)\le 0\}=&\{\max(0,s+X) =0\}\\
        \nonumber
        =&\{s+X\le 0\}\\
        \nonumber
        =&\{X\le -s\},
    \end{align}
    so that
    \[F_\kappa(0|s)=\mathbb P \{X \le -s \}=F_X(-s).\]
    \item If $b>0$, then
    \[\{\Phi(s,X)\le b\}=\{\max(0,s+X)\le b\}=\{s+X\le b\},\]
    since $b>0$, so that
    \[F_\kappa(b|s)=\mathbb P \{X \le b-s \}=F_X(b-s).\]
\end{enumerate}

Therefore, the Markov kernel can be written as
\begin{align}
    F_\kappa(b|s)=\begin{cases}
        0, \quad &b<0, \\
        F_X(-s), \quad &b=0,\\
        F_X(b-s), \quad &b>0.
    \end{cases}
\end{align}

The CDF update rule in Lemma \ref{LEMMA UPDATE} is then given by:
\begin{align}
    p_{0,n+1}=G_{n+1}(0)=&\int_{[0, \infty)} F_X(-s)\ \Pi_n(\mathrm{d}s), \\
    G_{n+1}(b)=&\int_{[0,\infty)} F_X(b-s)\ \Pi_n(\mathrm{d}s).
\end{align}

As mentioned at the end of Sec.~\ref{subsec:PDF}, manipulating the measure $\Pi_n(\mathrm{d}s)$ requires further assumptions to ensure the split between the atomic and absolutely continuous components. Therefore, we will turn to the PDF update equations derived above instead. In the case of the negative ramp control model, there is only one atomic component, corresponding to $b=0$, i.e., $C=\{0\}$. The expression for the dominating measure $\rho$ can then be written as
\begin{equation}
    \rho = \delta_0 + \lambda_{(0,\infty)}.
\end{equation}

We demonstrated that this measure dominates both the BESS update law (i.e., $\Pi_n$) and the Markov transition kernel $\kappa$ for countable-fixed atom sets; in particular, $\kappa \ll \rho$. Then, the RN derivative for the kernel is given by Eq.~\eqref{RNKERNEL}:
\begin{align}
    \frac{\mathrm{d}\kappa(s,\cdot)}{\mathrm{d}\rho}(b)=F_X(-s) \textbf{1}_{\{0\}}(b)+f_X(b-s)\textbf{1}_{(0,\infty)}(b),
\end{align}
where we have made the replacements $a(0|s)=F_X(-s)$ and $k(b|s)=f_X(b-s)$, and we recall that $F_X$ is absolutely continuous. Similarly, Eq.~\eqref{RNLAW} for the derivative of the battery law can be written as:
\begin{align}
    \frac{\mathrm{d}\Pi_n}{\mathrm{d}\rho}(s)=p_{0,n}\textbf{1}_{\{0\}}(s) + g_n(s) \textbf{1}_{(0,\infty)}(s),
\end{align}
given that there is only one atomic component in this case. Substituting these expressions into the evolution [Eqs.~\eqref{mastereq}] yields:
\begin{align}
    p_{0,n+1}=&\ p_{0,n}F_X(0)+\int_0^\infty F_X(-s)\ g_n(s)\ \mathrm{d}s, \\
    g_{n+1}(b)=&\ p_{0,n} f_X(b)+\int_0^\infty f_X(b-s)\ g_n(s)\ \mathrm{d}s, \quad b>0.
\end{align}

\subsection{Stationary law and the integral equation for the PDF}\label{subsec:stationary_law}

As stated in Assumption~\ref{MOD:IID_ASSUM}, the “free” walk $(X_n)_{n\ge1}$ has negative drift, $\mathbb{E}[X_1]<0$. Therefore, an unrestricted random walk depending on $X_n$, such as $B_{n+1}=B_n+X_{n+1}$, would drift to $-\infty$. The battery process given by Eq.~\eqref{DEF:BESS}, however, is a \emph{reflected} random walk, keeping the state from going negative, but still featuring a downward pressure towards zero. Each time $B_n$ hits zero, the next excursion restarts with new increments that are independent of the past.

Each excursion, i.e., a period during which the battery is injecting non-zero power into the grid, is independent of each other due to the IID assumption; with negative drift, the returns to zero occur infinitely often with finite duration. As stated by Lindley \cite{Lindley_1952}, whenever there is negative drift for $X_n$, the process $B_{n+1}=\max(0,B_n+X_{n+1})$ admits a stationary distribution for $B_n$. Under stationary conditions, $B_n\sim \Pi$ for all $n$, so the CDF for the BESS power is given by the integral equation:
\begin{equation}
    G(b)=\int_{[0,\infty)} F_X(b-s)\ \Pi(\mathrm{d}s), \ \ b\ge0.
\end{equation}

The integral equations for the CDF and PDF of the BESS power for the stationary case can then be recovered from the evolution equations as:
\begin{align}
            G(b)=&\ p_{0}F_X(b)+\int_0^\infty F_X(b-s)\ g(s)\ \mathrm{d}s, \\
            g(b)=&\ p_0f_X(b)+\int_0^\infty f_X(b-s)\ g(s)\ \mathrm{d}s, \ \ b\in(0,\infty),
            \label{eq:Lindley_CDF_PDF}
\end{align}

where the following normalization condition must be satisfied:
\begin{equation}
    1=p_0+\int_0^\infty g(b)\ \mathrm{d}b.
\end{equation}

The PDF integral equation is known as the Lindley integral equation. 

\subsection{Analytical solution}\label{subsec:analytical_solution}

Having found a general formula for the PDF of the battery discharge process, we may study specific choices for the kernel. If the chosen kernel features exponentially decaying tails, one may define an integral operator by
\begin{equation}
    (\textbf{K}g)(b):=\int_0^\infty f_X(b-s)\ g(s)\ \mathrm{d}s, \quad b\in [0,\infty),
\end{equation}

in an appropriate function space $\mathcal{W}$ such that $||\textbf{K}||_\mathcal{W}<1$. With this definition, the integral equation (\ref{eq:Lindley_CDF_PDF}) can be written as:
\begin{equation}
    g(b)=p_0f_X(b)+(\textbf{K}g)(b) \iff ([\textbf{I}-\textbf{K}]g)(b)=p_0f_X(b),
\end{equation}

where \textbf{I} denotes the identity operator on $\mathcal{W}$ and $g\in \mathcal{W}$. By introducing a re-scaled version $u(b)$ of the BESS power PDF through $g(b)=p_0u(b)$, where $p_0$ is the point mass density as before, the following simple form of the integral equation is obtained:
\begin{equation}
    ([\mathbf{I}-{\mathbf{K}}]u)(b) = f_X( b).
\end{equation} 

Probst \cite{PROBST2020105969} has observed that primary power changes $Y_n$ in wind farms are typically distributed according to a generalized Laplace distribution, which in turn can often be approximated by a simple Laplace (SL) distribution for the purposes of ramp control. Then, assuming $f_Y$ to be SL-distributed, $f_X$ can be written as $f_X(x)=\tfrac{\beta}{2} e^{-\beta|x+a|}$. Introducing normalized quantities by $\tilde{b}=\beta b$, $\tilde{s}=\beta s$ and $\tilde{a}=\beta a$ the following normalized expressions can be obtained:
\begin{align}
    \tilde{f}_X(\tilde{b}-\tilde{s})=&\tfrac{1}{2}e^{-|\tilde{b}-\tilde{s}+\tilde{a}|} \\
    \tilde{f}_X(\tilde{b})=&\tfrac{1}{2}e^{-(\tilde{b}+\tilde{a})}
\end{align}

In normalized variables we then have
\begin{equation}
    (\tilde{\mathbf{K}}g)(\tilde b) = \int_0^\infty \tfrac{1}{2}\, e^{-|\tilde b-\tilde s+\tilde a|}\,g(\tilde s)\,\mathrm{d}\tilde s,
\end{equation}

The operator [$\textbf{I}-\mathbf{\tilde{K}}$] is invertible through a Neumann series, which converges under two conditions: (1) $\mathbf{\tilde{K}}$ acts on the weighted Banach space $\mathcal{C}_\theta=\{g:[0,\infty)\rightarrow\mathbb R: ||g||_\theta < \infty \}$ endowed with the norm $||g||_\theta=\sup_{b\ge 0} \ e^{\theta b}|g(b)|$; and (2) the operator norm satisfies $||\mathbf{\tilde{K}}||<1$. As a consequence of (1), the operator norm is controlled by the momentum-generating function $M_X(\theta)$ of $X$ since $||\mathbf{\tilde{K}}||_{\theta} \leq M_X(\theta)=E[e^{\theta X}]$. (A complete derivation of this estimate is provided in the supplementary repository \cite{BESS_Sizing_Repository}). In particular, the SL case features $M_X(\theta)=e^{-\tilde{a}\theta}/(1-\theta^2)$ for $|\theta|<1$. Due to the negative drift condition, there exists at least one $\theta>0$ that satisfies $M_X(\theta) < 1$ within the neighborhood where $e^{-\tilde{a}\theta}<1-\theta^2$. The solution to the integral equation can then be expressed as
\begin{equation}
    u(\tilde b)=([\mathbf{I}-\tilde{\mathbf{K}}]^{-1}\tilde f_X)(\tilde b) = \sum_{n=0}^\infty \tilde{\mathbf{K}}^{\,n}\tilde f_X(\tilde b).
\end{equation}

In the case of the simple Laplace kernel, the integral operator split at $(\tilde{b}-\tilde{s})+\tilde{a}=0$: 
\begin{equation}
\nonumber
    (\tilde{\mathbf{K}}g)(\tilde b)=\tfrac{1}{2}\!\left[ e^{-(\tilde b+\tilde a)} \!\int_0^{\tilde b+\tilde a} \!e^{\tilde s}\,g(\tilde s)\, \mathrm{d}\tilde s
    + e^{\tilde b+\tilde a}\!\int_{\tilde b+\tilde a}^{\infty}\!e^{-\tilde s}\,g(\tilde s)\,\mathrm{d}\tilde s\right].
    \label{split}
\end{equation}

The action of the operator has therefore been separated into two exponential integrals that can be computed by hand. The partial sums can be readily computed by the repeated application of $\mathbf{\tilde{K}}$. The zeroth term will be denoted as:
\begin{equation}
    u_0(\tilde b)=\tfrac{1}{2}\,e^{-(\tilde b+\tilde a)}.
\end{equation}

Then, the first partial sum is given by:
\begin{equation}
    u_1(\tilde b)=u_0(\tilde b)+(\tilde{\mathbf{K}}\tilde f_X)(\tilde b).
\end{equation}

Calculating $(\tilde{\mathbf{K}}\tilde f_X)(\tilde b)$ implies evaluating the following two terms:
\begin{align}
\nonumber
    \int_{\tilde b+\tilde a}^\infty e^{-\tilde s}\,\tilde f_X(\tilde s)\,\mathrm{d}\tilde s
    =&\int_{\tilde b+\tilde a}^\infty e^{-\tilde s}\,\tfrac{1}{2}e^{-(\tilde s+\tilde a)}\,\mathrm{d}\tilde s
    =\tfrac{1}{4}\,e^{-(2\tilde b+3\tilde a)},\\
        \int_{0}^{\tilde b+\tilde a} e^{\tilde s}\,\tilde f_X(\tilde s)\,\mathrm{d}\tilde s
    =&\int_{0}^{\tilde b+\tilde a} e^{\tilde s}\,\tfrac{1}{2}e^{-(\tilde s+\tilde a)}\,\mathrm{d}\tilde s
    =\tfrac{1}{2}\,(\tilde b+\tilde a)\,e^{-\tilde a}.
\end{align}

With these results, the first partial sum can be written as:
\begin{equation}
\nonumber
    u_1(\tilde b)=u_0(\tilde b)+\tilde{\mathbf{K}}\tilde f_X(\tilde b)
    =\tfrac{1}{2}e^{-(\tilde b+\tilde a)}+e^{-(\tilde b+2\tilde a)}\left(\tfrac{1}{8}+\tfrac{1}{4}(\tilde b+\tilde a)\right).
\end{equation}

Following this pattern, the $n$th partial sum of the infinite series can be represented by:
\begin{equation}
    (\tilde{\mathbf{K}}^{\,n}\tilde f_X)(\tilde b)=e^{-(\tilde b+(n+1)\tilde a)}\sum_{k=0}^n \bar{\Lambda}_{n,k}\,(\tilde b+\tilde a)^k,
\end{equation}

where the $\Lambda_{n,k}$ are real coefficients. The re-scaled BESS power PDF $u(b)=g(b)/p_0$ is then given by the following infinite series:
\begin{equation}
    u(\tilde b)=\sum_{n=0}^{\infty} e^{-(\tilde b+(n+1)\tilde a)}\sum_{k=0}^n \bar{\Lambda}_{n,k}\,(\tilde b+\tilde a)^k = \sum_{n=0}^\infty v_n,
    \label{eq:rescaled_BESS_PDF}
\end{equation}

where we have introduced the variable $v_n$ for later use (equation (\ref{eq:Truncated sums})). In order to find the recursive formula for $\bar{\Lambda}_{n,k}$, we define the unscaled basis functions:
\begin{equation}
    \tilde\phi_{p,k}(\tilde s):=(\tilde s+ \tilde a )^k\,e^{-(\tilde s+p\tilde a)}, \quad p\ge1,\ \ k\ge0,
\end{equation}
The procedure to find the series coefficients can be found in \ref{app:coeff}. Since the kernel splits in two, there are two contributing portions to the coefficient recursion; the "upper" portion was found to be:
\begin{equation}
   U_{k\rightarrow r}=\sum_{m=r}^k \frac{1}{2}\,\frac{k!}{m!}\begin{pmatrix}
        m \\ r
    \end{pmatrix} \tilde a^{\,m-r}\left(\frac{1}{2}\right)^{k-m+1},
\end{equation}

where $r=0,1,\cdots, k$. The lower portion is given by:
\begin{equation}
    L_{k\rightarrow r}=\frac{1}{2}\,\frac{1}{k+1}\begin{pmatrix}
        k+1 \\ r
    \end{pmatrix} \tilde a^{\,k+1-r},
\end{equation}

for $r=1,\cdots, k+1$ and where $L_{k\rightarrow 0}=0$. Finally, the coefficients are given by:
\begin{equation} \label{eq: BESS coef recursion}
    \bar{\Lambda}_{n+1,r}=\sum_{k=0}^n \bar{\Lambda}_{n,k}\,\big(L_{k\rightarrow r} + U_{k\rightarrow r}\big).
\end{equation}

For any $u\in \mathcal{C}_\theta$, the weighted norm bound implies $u\in L^1([0,\infty))$ directly, since $ \mathcal{C}_\theta \subset L^1([0,\infty))$. Consequently, the following holds:
\begin{equation}
    \int_0^\infty |u(b)|\ \mathrm{d}b \le  \frac{1}{\theta}||u||_{\theta}< \infty.
\end{equation}

Therefore, we may perform a probability normalization in the unweighted $L^1$ to find $p_0$:
\begin{equation}
    p_0=\frac{1}{1+\int_0^\infty u(\tilde{b})\mathrm{d}\tilde{b}},
    \label{NormalizationCond}
\end{equation}

so that $g( \tilde{b})=p_0 u( \tilde{b})$. Substituting the expansion of Eq.~\eqref{eq:rescaled_BESS_PDF} into this expression, we obtain:
\begin{equation}
    p_0 = \left(\,1+\int_{0}^{\infty}\sum_{n=0}^{\infty}
        e^{-(\tilde b+(n+1)\tilde a)}
        \sum_{k=0}^{n}\bar{\Lambda}_{n,k}\,(\tilde b+\tilde a)^{k}\,\mathrm{d}\tilde b\right)^{-1}= \frac{1}{1+\tilde\Omega}.
\end{equation}

In \ref{app:point}, we further study this integral expression:
\begin{equation}
    \tilde\Omega \;=\; \int_{0}^{\infty}\sum_{n=0}^{\infty}
        e^{-(\tilde b+(n+1)\tilde a)}
        \sum_{k=0}^{n}\bar{\Lambda}_{n,k}\,(\tilde b+\tilde a)^{k}\,\mathrm{d}\tilde b,
\end{equation}

and arrive at the following series expansion:
\begin{equation}
    \tilde\Omega \;=\; \sum_{n=0}^{\infty} e^{-(n+1)\tilde a}
        \sum_{k=0}^{n}\bar{\Lambda}_{n,k}
        \sum_{j=0}^{k}\binom{k}{j}\,\tilde a^{\,k-j}\, j!=\sum_{n=0}^\infty \tilde{\omega}_n.
        \label{eq:Omega_tilde_full}
\end{equation}

For numerical evaluation of the Neumann series described above, we have generated partial sums up to $M$ terms, i.e.,
\begin{equation}
    u_M=\sum_{n=0}^M v_n\ , \quad \tilde{\Omega}_M = \sum_{n=0}^M \tilde{\omega}_n,
    \label{eq:Truncated sums}
\end{equation}

where $v_n$ and $\tilde{\omega}_n$ were defined in equations (\ref{eq:rescaled_BESS_PDF}) and (\ref{eq:Omega_tilde_full}), respectively. Equations (\ref{eq:Truncated sums}) were evaluated through the application of the coefficient recursion to the polynomial-exponential functions and evaluated on a discrete BESS power value grid. It should be noted that the sums in equations (\ref{eq:rescaled_BESS_PDF}) start at $n=0$, which is why the sums have $M'=M+1$ terms.

\subsection{Numerical solution to the integral equation}\label{sec:numerics}
In order to validate the analytical solution of the Wiener-Hopf integral equations defining the PDF and the CDF of the BESS power, a numerical solution was implemented as well. The complete construction of this solution is explained in \ref{app: Nystrom}. Both the Nyström and the Piccard methods were implemented. For the sake of simplicity, only the results for the Nyström method are reported.

\subsection{Algorithmic solution}\label{subsec:algorithmic_solution}

In order to validate both the analytical and the numerical solutions of the case of negative ramp control studied in this work, a simple algorithmic solution has been implemented as well. Synthetic 1-min wind power series $(P_\mathrm{sim}(t_n))$ were generated by numerically integrating synthetic time series for the power output changes $(Y_n)$ distributed according to the following generalized distribution (see \cite{Probst_2025} and \cite{PROBST2020105969})): 

\begin{equation}
    f_Y(y)=\frac{\gamma}{2}\left(c\ \zeta e^{-\zeta\gamma|Y|}+(1-c)e^{-\gamma|Y|}\right),
    \label{eq:Generalized_Laplace}
\end{equation}

where $c\in [0,1]$ is a free parameter and $\zeta \gg1$. In the present work, $\zeta = 10$ was used as a representative value. The case $c=0$ corresponds to the case of $Y$-values following a simple Laplace (SL) distribution. Versions of equation (\ref{eq:Generalized_Laplace}) with $c>0$ will be referred to as generalized Laplace (GL) distributions. We allowed for such a generalized distribution because realistic wind and solar power changes are often better described by equation (\ref{eq:Generalized_Laplace}) than by simple Laplace distributions. Other possible deviations from ideality, including finite autocorrelation and heteroscedasticity, are beyond the scope of the current work and will be addressed in follow-up work.

In order to be able to compare the results for different values of $c$ the power scale $\beta^{-1}$ was chosen according to \cite{Probst_2025} 

\begin{equation}
    \gamma=\beta\sqrt{\frac{c}{\zeta^2}+(1-c)},
    \label{eq:equivalent_beta}
\end{equation}

where $\beta^{-1}$ is the characteristic power scale of the equivalent SL distribution with the same variance as the GL distribution? All results below will be reported in terms of the normalized variables  $\tilde{a}=a\beta$ and $\tilde{b}=b\beta$, where $a$ continues to be the tolerable change in grid power per time step and $b$ the BESS power, both typically measured in MW.

Synthetic time series for the power changes $(Y_n)$ were generated based on the inverse sampling theorem \cite{devroye2006nonuniform} and then integrated to obtain the power time series $(P_{\mathrm{sim},n})$. Upper ($P_\mathrm{max}$) and lower ($0$) power limits were introduced to account for the finite capacity of the wind farm. Plausible values for $P_\mathrm{max}$ can be obtained by first calculating an equivalent SL power scale $\beta^{-1}=\sigma/\sqrt{2}$ from the standard deviation of the power output variations, which may be available even in the absence of time series with high temporal resolution, and then determining the normalized plant capacity $p_\mathrm{max}$ by $p_\mathrm{max}=P_\mathrm{max}\beta$. In the present work, an empirical value of 0.5 $p_\mathrm{max}=90.9$ was used in all simulations.

\section{Results and discussion}\label{sec:results_discussion}

\subsection{Performance of the analytical solution against the algorithmic reference}\label{sec: analytical vs algorithmic}

The first question naturally arising is how well the analytical solution performs against an algorithmic implementation of the same underlying recursion. The following specific research questions have been addressed: (1) Given plausible values of the normalized critical step change in grid power $\tilde{a}$, how does the analytical solution perform as a function of the number $M$ of terms included in the series expansion? And (2) Does the BESS power PDF indeed depend only on the normalized critical slope $\tilde{a}$, as asserted by the theoretical solution (equation (\ref{eq:Lindley_CDF_PDF})), or do the results depend on $a$ [MW] and the characteristic power scale $\beta^{-1}$ [MW] independently?

In order to address question (1), synthetic wind power time series were generated according to the procedure described in subsection \ref{subsec:algorithmic_solution}. The simulation parameters are summarized in Table~\ref{tab: Sim params}. The tolerable slope \(a\) was defined as \(1\%\) of the rated power \(P_0\) [MW] per time step. In order to get a feel for the severity of the ramp rate limit imposed in this simulation, we note that for the normalized tolerable slope of $\tilde{a}=0.9018$ shown in Table \ref{tab: Sim params} the primary power source violates the ramp rate limit during about 32\% of the time steps; the ramp rate limit therefore has a significant impact on the system.
\begin{table}[h!]
    \centering
    \begin{tabular}{c|c|c|c|c}
      $N$  & $P_0 \; [\mathrm{MW}]$ & $a \; [\mathrm{MW}]$ & $\beta \; [1/\mathrm{MW}]$ & $\tilde{a}$\\
      \hline
      $5\times10^6$ & $150.29$ & $1.503$ & $0.6$  & $0.9018$
    \end{tabular}
    \caption{Parameters used for simulation and analytical evaluation}
    \label{tab: Sim params}
\end{table}

The error between the analytical solution for a specific number $M'=M+1$ of terms of the expansion is measured by taking the $L^1$ norm between the algorithmic PDF and the analytical PDF for the given value of $\tilde{a}$:
\begin{equation}\label{eq: DM norm}
    D_{L^1} = || \tilde{g}(\tilde{b};\tilde{a})^{(M)} - \tilde{g}(\tilde{b};\tilde{a})^{\mathrm{(sim)}} ||_{L^1}
\end{equation}

Figure \ref{fig: PDF and cdf comparision} shows a comparison between the algorithmic and the analytical solution, the latter for a varying number $M \in [0,\infty)$ of the truncated Neumann series ($M=3$, $M=6$ and $M=100$). Both the absolutely continuous part of the PDF and the CDF are shown; note that the CDF includes the point mass of the distribution. It can be seen that  the analytical solution exhibits rapid convergence towards the algorithmic solution. The error between the absolutely continuous part of the PDF, as measured by $D_{L^1}$, is 7.7\%, 2.2\% and 2\% for $M=3$, $M=6$ and $M=100$, respectively. For illustration, the P99-values $\tilde{b}_{99}$ of the normalized BESS power $\tilde{b}$ have been included in the figure. It can be seen that for $M=100$ the agreement between the analytical ($\tilde{b}_{99}=5.15$) and the algorithmic solution ($\tilde{b}_{99}=5.14$) is almost perfect, as long as a large value for the nominal wind farm power is assumed in the algorithmic simulation ($P_\mathrm{max}\rightarrow\infty$). The consideration of a finite wind farm capacity  ($P_\mathrm{max}=150$ MW) produces a slight deviation from the purely Laplacian distribution of the power increments $Y$, leading to a small ($\sim 3\%$) reduction in the P99-value of the battery power. This can be explained by the fact that for output values of the primary power source of $P\sim P_\mathrm{max}$ and $P\sim 0$ the variability is reduced. Using $M=6$ ($M'=7$ terms of the expansion) leads to an underestimation of some $8\%$ of the true P99-value with respect to the reference case ($M=100$, $P_\mathrm{max}\rightarrow\infty)$; using $M=3$ (4 terms of the expansion) increases the error to 17\%.

\begin{figure}[h!]
    \centering
    \includegraphics[width=1\linewidth]{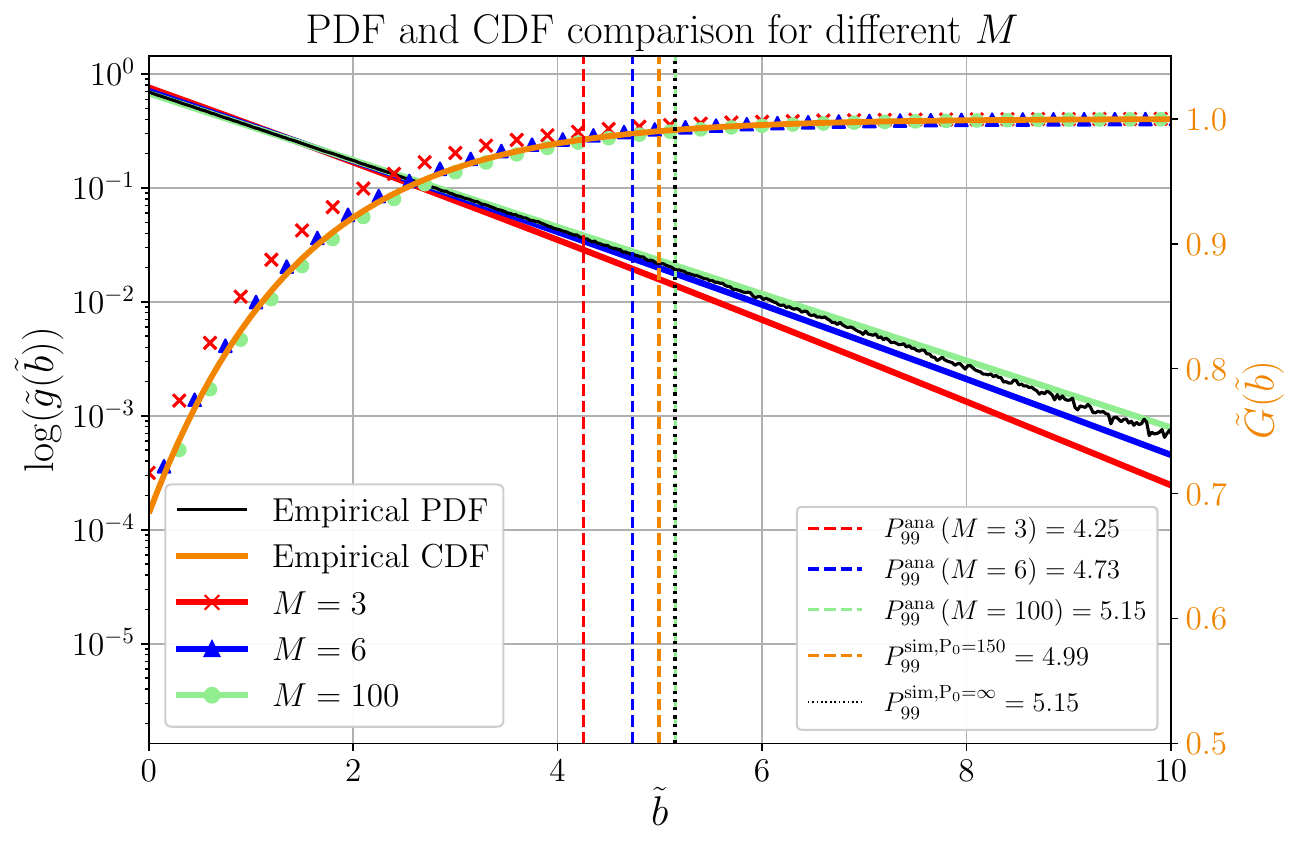}
    \caption{Convergence of the analytical solution with the algorithmic simulation with 3, 6 and 100 $M$ terms of the expansion for the probability density function tail (black) and the cumulative density function (orange)}
    \label{fig: PDF and cdf comparision}
\end{figure}

Further insights into the accuracy of the analytical solution compared to the algorithmic reference can be gained by varying the value of the normalized tolerable slope $\tilde{a}$. As \(\tilde{a}\) decreases the BESS workload becomes larger, leading to an increase of the P99 value of the BESS power, as shown in Figure~\ref{fig: p99 analytical vs Numerical}. As the figure demonstrates, the analytical solution with $M=100$ is an excellent solution to the algorithmic reference for almost the full range of $\tilde{a}$values shown. In the case of $M=100$ shown in Figure~\ref{fig: p99 analytical vs Numerical} (a), a noticeable discrepancy occurs only at very small values of the normalized ramp rate limit \(\tilde{a} \le 0.16\), where the simulation \(P_{99}\) exceeds the analytical estimate by more than 10\%. Evidently, such extremely stringent values have very little practical relevance. For very large values of the tolerable slope (\(\tilde{a} > 3.5\)), on the other hand, the required BESS power capacity is essentially zero.

Regarding the convergence behavior of the analytical solution, it can be noticed (Figure~\ref{fig: p99 analytical vs Numerical} (b)) that the number of required terms for a given error level (taken to be $D_{L^1}=0.05$ in the figure) strongly increases towards small values of $\tilde{a}$. Whereas this compliance level can be met with $M \leq 2$ for $\tilde{a}\ge 1.5$, nearly 200 terms would be needed at $\tilde{a}=0.1$.  

\begin{figure}[h!]
    \centering
    \includegraphics[width=1\linewidth]{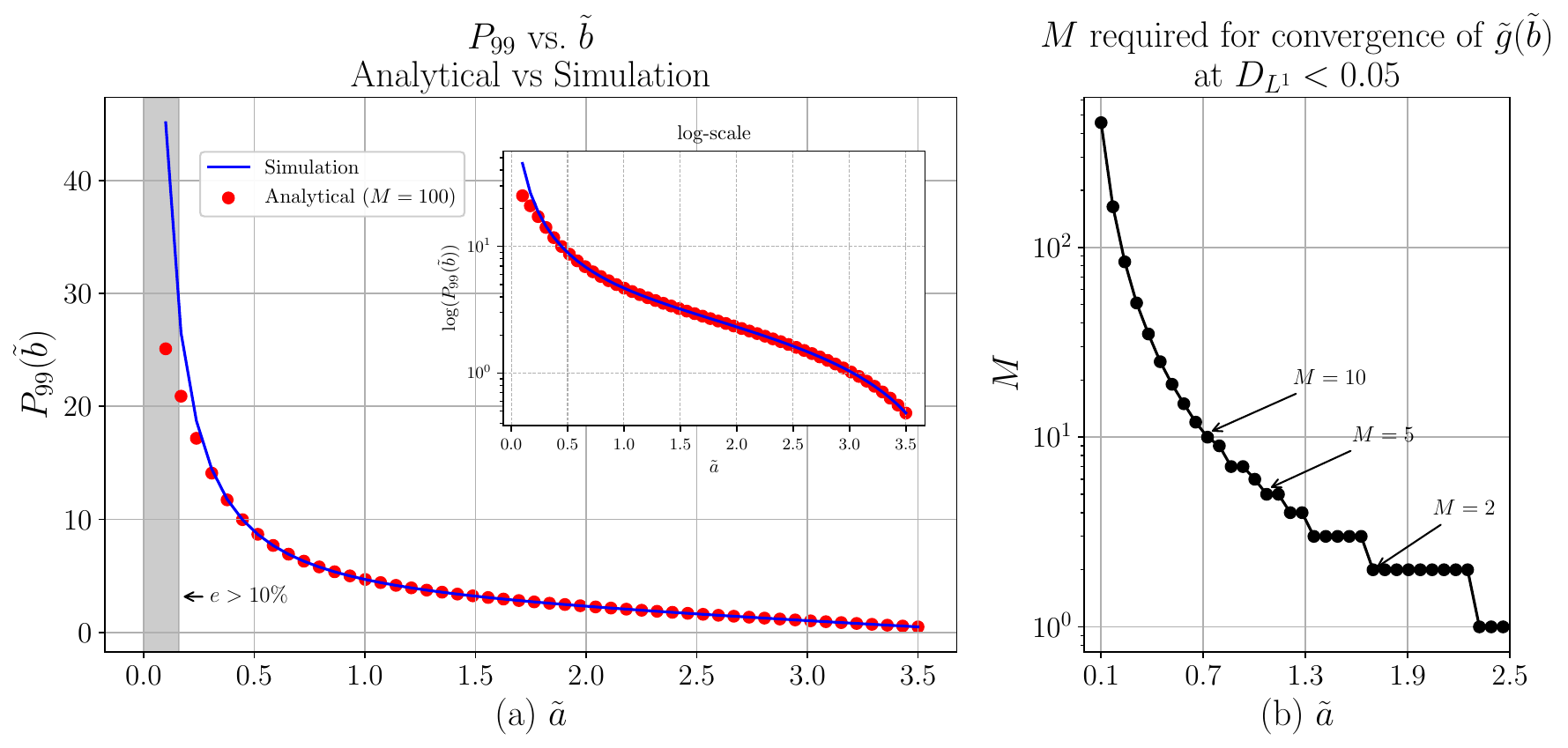}
    \caption{Performance of the analytical solution against the algorithmic battery simulation. a) $P_{99}(\tilde{a})$ comparison between battery simulation and analytical solution with $M=100$. b) Number of terms of the expansion needed to reach convergence with $D_{L^1}\leq0.05$}
    \label{fig: p99 analytical vs Numerical}
\end{figure}

Research question (2), i.e., the question whether the required BESS power capacity only depends on the normalized tolerable slope $\tilde{a}$ rather than on $a$ and $\beta$ independently, was addressed by demonstrating that \(P_{99}(\tilde{b})\) remains invariant for different combinations of \(a\) and \(\beta\) yields the same value of \(\tilde{a}\). Figure~\ref{fig:a_beta_independence} (a) confirms that, as predicted by the theoretical model, the BESS power PDF depends uniquely on the normalized critical slope \(\tilde{a}\), as evidenced by the empirical observation that three different combinations of $a$ and $\beta$ produce the same invariant curve. The combinations selected are illustrated in Figure~\ref{fig:a_beta_independence} (b) where the selected sets of points $(a,\beta)$ have been plotted on top of isolines of constant $\tilde{a}$ in the $a$-$\beta$ plane. Consequently, it can be concluded that the curve \(P_{99}(\tilde{b})\) is unique and fully characterizes the required BESS power capacity for any scenario defined by \(\tilde{a}\).

\begin{figure}[h!]
    \centering
    \makebox[\textwidth][c]{%
        \includegraphics[width=1\textwidth]{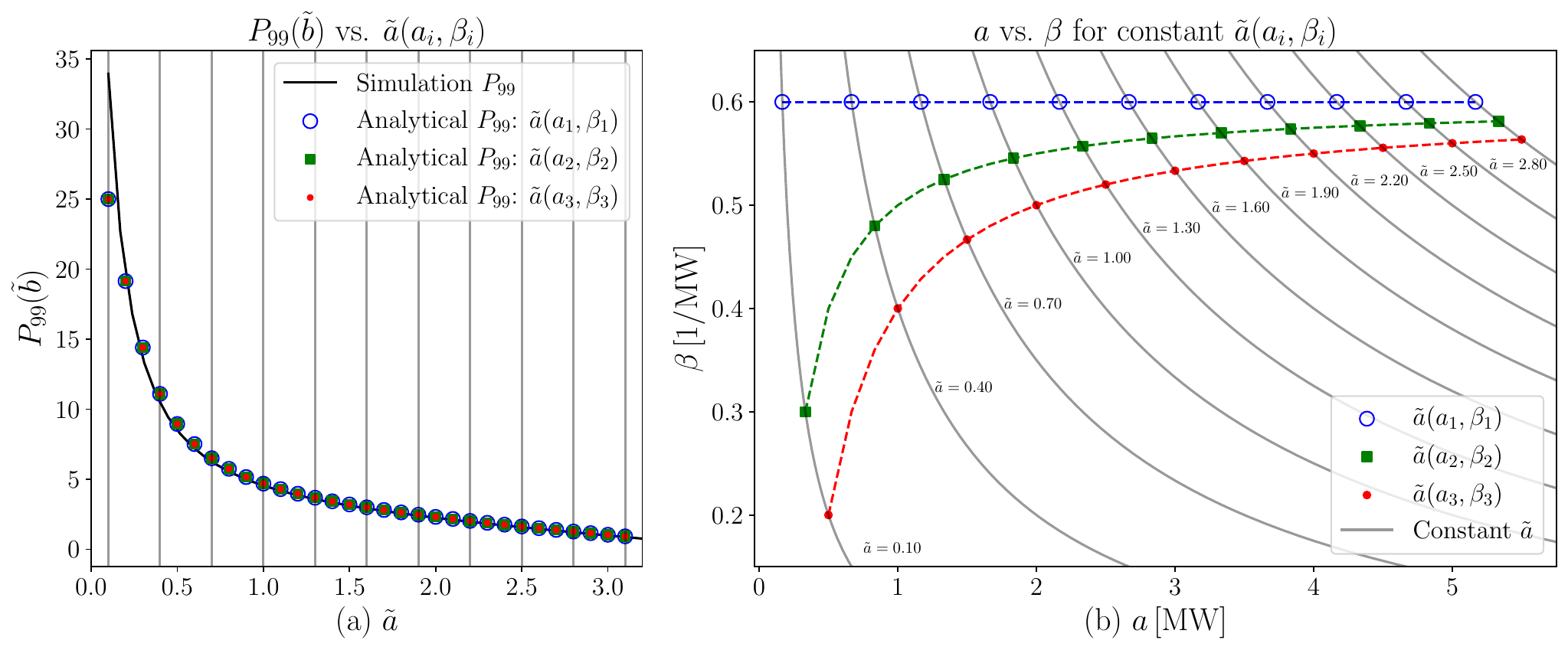}}
    \caption{Validation of $\tilde{g}(\tilde{b},\tilde{a})$ independence on $a$ and $\beta_{SL}$. (a) Comparison of $P_{99}(\tilde{a})$ values for 3 different cases along the $\tilde{a}$ domain. (b) Equivalence of each case following $\tilde{a}=a_1\beta_2=a_2\beta_1=a_3\beta_3$.}
    \label{fig:a_beta_independence}
\end{figure}

\subsection{Comparison of the analytical and the numerical solutions}

While a theoretical solution provides significantly more insight than a numerical solution, it is fair to ask how the accuracies compare. This leads to research question (3): Considering that the analytical solution has been stated in terms of an infinite power series, what are the trade-offs in terms of accuracy vs. computation time? To answer this question, we first observe
(Figure~\ref{fig: RQ2.1-RQ2.2} (a)) that the analytical solution with a modest $M$-value of $M=17$ and the numerical solution with a number of $1000$ grid points used for the evaluation of the Toeplitz operator (\ref{app: Nystrom}) yield practically indistinguishable results between each other for the absolutely continuous part of the BESS power PDF, which are also practically identical to the PDF obtained with the algorithmic solution. For completeness, the P99 values of $\tilde{b}$, which also involve the calculation of the point mass probability $p_0$, have also been indicated in the figure, showing that very similar values are obtained in all three cases.

Regarding the numerical solution, implemented through the Nyström method described in \ref{app: Nystrom}, it is worth noticing that its accuracy is independent of the normalized critical slope $\tilde{a}$ and is instead controlled solely by the grid resolution of the discrete Toeplitz operator. With a number of grid points of $1000$, the Nyström solver achieves a precision of $D_{L^1} \leq 0.01$ for all tested values of $\tilde{a}$, while maintaining computation times below 0.1 s. The analytical method, on the other hand, inherits its computational cost from the structure of the truncated series. As can be seen in Figure \ref{fig: RQ2.1-RQ2.2} (b), the computation time rises steeply with the number of terms $M'$. As we may recall from the inspection of Figure \ref{fig: p99 analytical vs Numerical} (b), the number of terms in the series expansion required for a given level of accuracy strongly depends on $M$, making the computation time dependent on the normalized critical slope $\tilde{a}$. Fortunately, for practical values of $\tilde{a}$, only relatively few terms are required, making the analytical calculation typically more efficient than the numerical one. As can be seen from Figure \ref{fig: p99 analytical vs Numerical} (b), the computation time remains below 0.1 seconds up to $M\sim 75$. Considering the specific case of the simulation parameters listed in Table 1, the Nyström method achieved an accuracy $D_{L^1}=0.0084$ in 0.044 s, while the analytical solver reached a comparable accuracy of $D_{L^1}=0.0094$ using only $M'=18$ terms, with a computation time of 0.004 s on the same computer, i.e., one order of magnitude faster.

\begin{figure}[h!]
    \centering
    \includegraphics[width=1\linewidth]{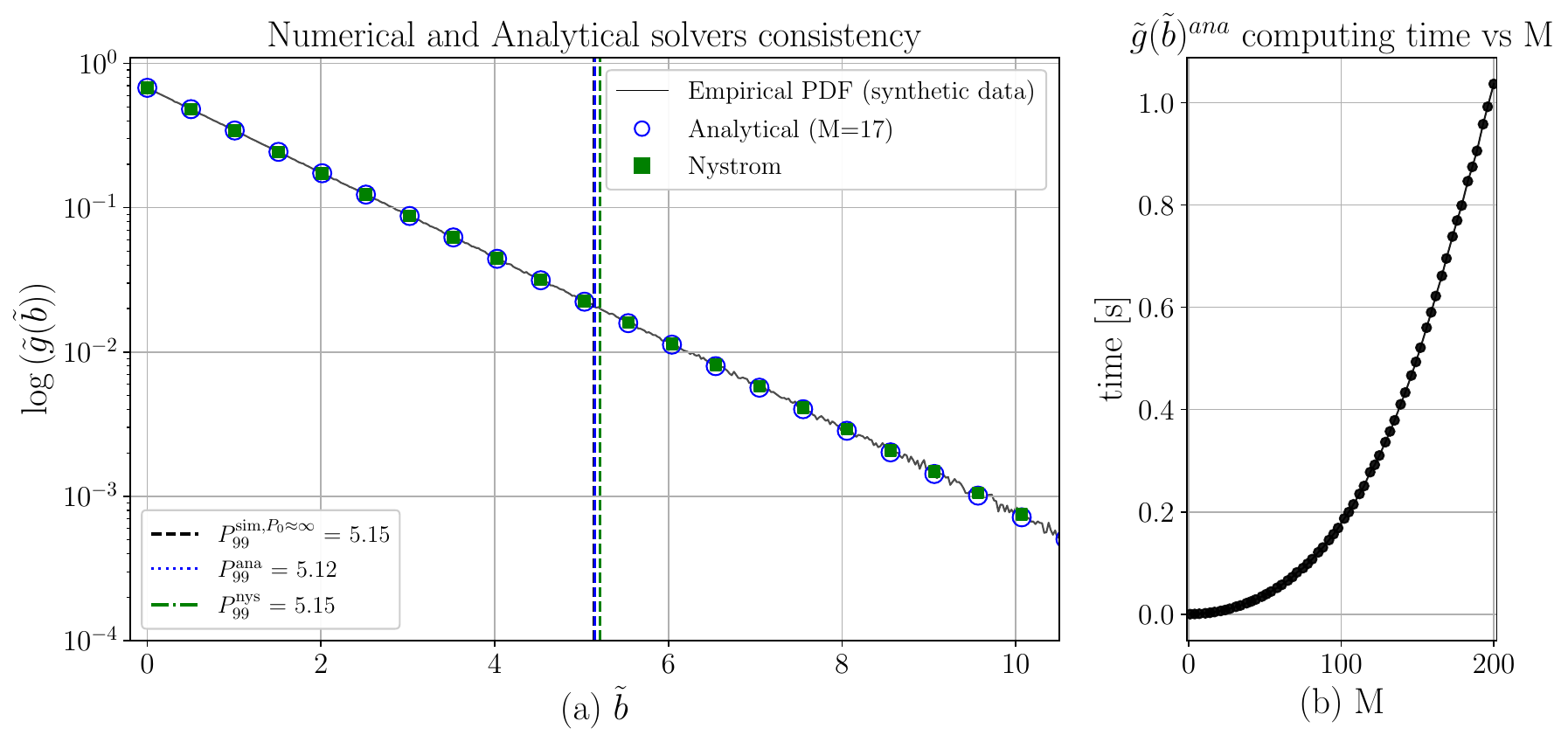}
    \caption{a) Comparison of the algorithmic solution using $P_{sim}^{P_0\approx \infty}$ against Nyström method and Neumann series integral solvers for $\tilde{a}=0.9$. b) Time complexity evolution for the analytical solution in terms of $M$}
    \label{fig: RQ2.1-RQ2.2}
\end{figure}

\subsection{Impact of generalized Laplacian distributions}

As discussed in section \ref{subsec:algorithmic_solution}, real-world wind farm variations are often distributed according to a generalized Laplace distribution of the form given by equation (\ref{eq:Generalized_Laplace}), whereas the theoretical results derived in this paper apply to simple Laplace distributions. A finite value of $c$ in equation (\ref{eq:Generalized_Laplace}) for a given variance of the power increments $Y$ implies a heavier tail of the distribution compared to the $c=0$ case and, consequently, a larger P99-value $\tilde{b}_{99}$. It is therefore interesting to assess how these deviations from ideality influence the required capacity of the BES system.

Fortunately, the impact of heavier tails on the values of $\tilde{b}_{99}$ is relatively modest, as evidenced by Figures \ref{fig: SL vs GL p99} and \ref{fig: RQ3.3}, respectively. Figure \ref{fig: SL vs GL p99} assesses the effect of a finite value of $c$, taken to be $c=0.25$, through the algorithmic solution only. The parameter values in Table~\ref{tab: Sim params} were used once again. As explained in subsection \ref{subsec:algorithmic_solution}, the two simulations were conducted at the same variance level for the power increments $Y$ in order to ensure comparability; this leads to the requirement of $\gamma = 0.537$assuming $\zeta = 10$.

As shown by Figure \ref{fig: SL vs GL p99}, the overall effect of the finite value of $c$ is very small. The P99 values of $\tilde{b}$ are slightly larger for the GL distribution compared to the SL case, as expected from the heavier tail mentioned above, but the difference generally amounts to only a few percent. The relative difference between the two cases increases towards larger values of the tolerable slope $\tilde{a}$, which is readily explained by the fact that for large values of $\tilde{a}$ the BESS, it only operates during rare isolated time steps, leading to the convergence of the absolutely continuous part of the BESS power PDF and the PDF of the shifted negative power increments $X=-Y-a$ \cite{Probst_2025}. Given that the GL-PDF for $X$ has a heavier tail than the SL-PDF, so does the BESS power PDF. At small values of $\tilde{a}$ the dominant effect is the correlation between consecutive values of $B$ due to the Markov process and the difference between GL and SL distributions becomes negligible.

\begin{figure}[h!]
    \centering
    \includegraphics[width=1\linewidth]{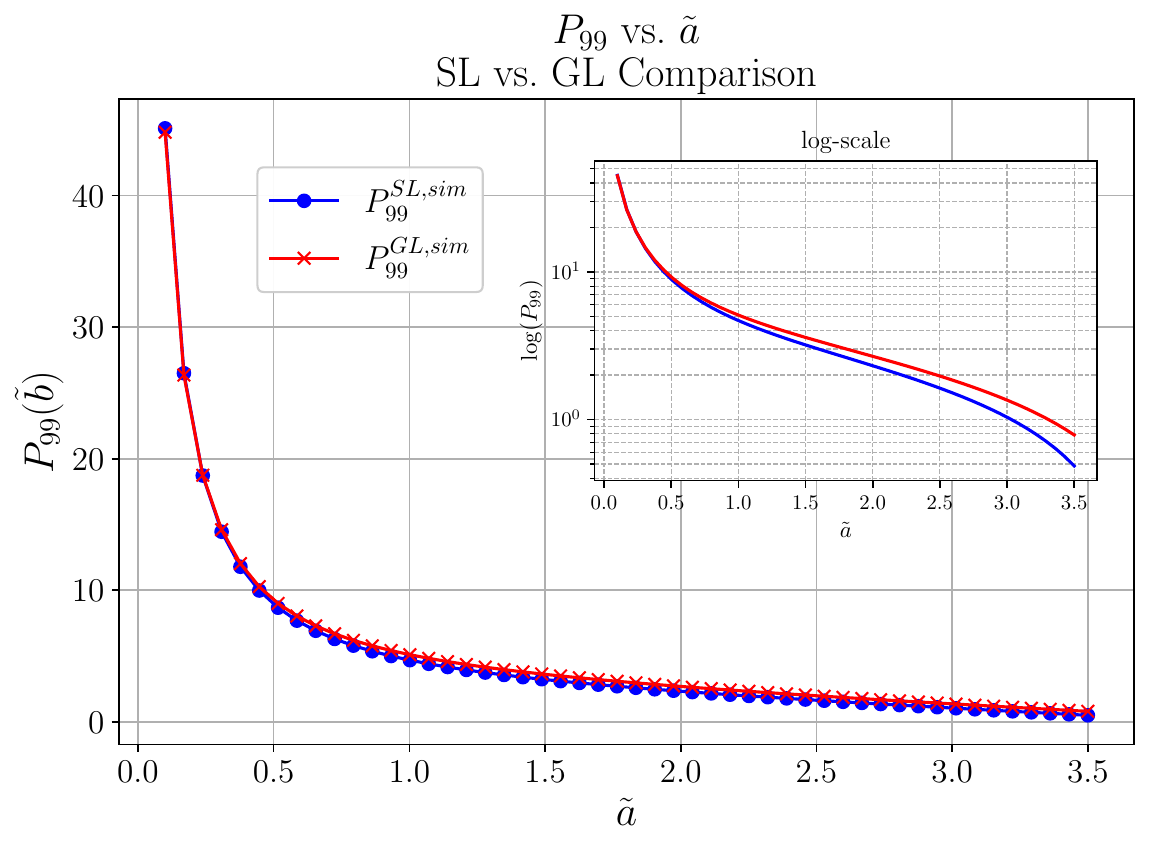}
    \caption{GL influence in the BESS PDF for different $\tilde{a}$ values with a fixed weight ratio parameter $c=0.25$. $P_{99}$ curve comparison for SL (blue) and GL (red).}
    \label{fig: SL vs GL p99}
\end{figure}

Figure~\ref{fig: RQ3.3} shows how the analytical solution compares with three cases of GL-distributed primary power increments $Y$, ranging from a very small $c$-value ($c=0.05$) to the case ($c=0.5$) where half of all events correspond to very small changes and, correspondingly, the long tail becomes much heavier. The dominant effect continues to be the increase of $\tilde{b}_{99}$ at large $\tilde{a}$-values. It can be seen that at, say, $\tilde{a}=3$ the $\tilde{b}_{99}$-value for the $c=0.5$ case is about twice as high as in the SL-case (represented by the analytic solution), in accordance with the findings of Probst \cite{Probst_2025}. However, given that the overall values are very small, this effect can be largely neglected. It can also be noted that $\tilde{b}_{99}$ in the $c=0.5$ case increases beyond the analytical prediction at very small values of $\tilde{a}$. However, such small values, as mentioned before, are completely irrelevant in practice since they would correspond to the requirement of a practical flat output of the VRE plant.

\begin{figure}[h!]
    \centering
    \includegraphics[width=0.9\linewidth]{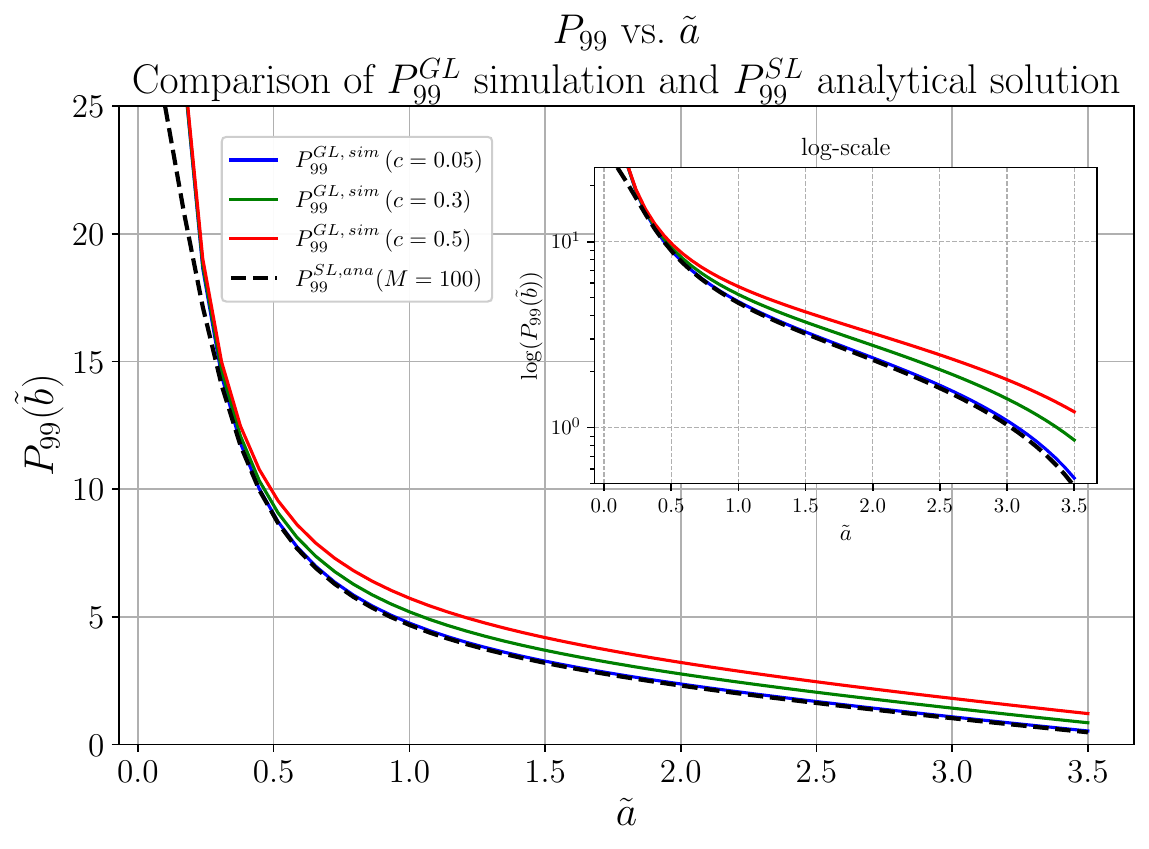}
    \caption{Impact of weight ratio parameter $c$ on GL-distribution power changes for BESS PDF (colored lines) showing the $P_{99}$ curve comparison against the SL-analytical Neumann series solution (black line).}
    \label{fig: RQ3.3}
\end{figure}

\subsection{Sizing of BESS inverters}

Having found the BESS power PDF from the solution to the integral equation using Neumann series, the required BESS power capacity can be determined easily in terms of a suitable P-value of the distribution. We first recall that the BESS power PDF is given by $g(\tilde{b},\tilde{a})=p_0(\tilde{a})\ u(\tilde{b},\tilde{a})$, where $u(\tilde{b})$ and $p_0(\tilde{b})$ are given by equations (\ref{eq:rescaled_BESS_PDF}) and (\ref{NormalizationCond}), respectively. One can then calculate the survival function, 
\begin{equation}
    S(\tilde{b},\tilde{a})=1-G(\tilde{b},\tilde{a})
\end{equation} 

where the BESS power CDF $G(\tilde{b})$ is given by
\begin{equation}
    G(\tilde{b};\tilde{a})=p_0+\int_0^{\tilde b} g(b';\tilde{a})\ \mathrm{d}b'.
\end{equation}

The result can be expressed in terms of the upper incomplete Gamma function, yielding
\begin{equation}
    S(\tilde b;\tilde a)=p_0 \sum _{n=0}^\infty e^{-(n+1)\tilde a}\sum_{k=0}^n \bar{\Lambda}_{n,k} \Gamma \left(k+1, \tilde b+ \tilde a\right).
\end{equation}

One may then define the normalized BESS power capacity $\tilde{b}_0$ by solving the percentile equation for $\tilde{b_q}$,
\begin{align}
    S(\tilde{b}_0,\tilde a)=S(\tilde{b}_q,\tilde a)=1-q,
\end{align}

where $q\in(p_0,1)$. Typically, $\tilde{b}_0 = \tilde{b}_{0.99}$, i.e., the battery power capacity is taken as the P99 value of the battery power PDF. We will now apply these formulas to the case of truncated analytical solutions, emulating the work of a power systems practitioner.

\subsection{Truncated analytical solutions: A practical sizing example}

As stated at the outset of this manuscript, the availability of a simple analytical formula goes a long way towards allowing the power systems practitioner to size a BES system for ramp-rate control without having to recur to black-box simulations. This will be illustrated with a simple example, emulating the work of a system operator tasked with conducting an initial BESS sizing study for a wind power plant with the characteristic parameters shown in Table \ref{tab: Sim params}. To do so, the operator refers to its sizing manual, which specifies that equations (\ref{eq:three_term_approximation_1}), (\ref{eq:three_term_approximation_2}), and (\ref{eq:three_term_approximation_3}) shown below should be used for the purposes of indicative studies and that a safety factor of 20\% should be applied after the calculations. Considering the case $M=2$, the following equations are obtained from (\ref{eq:rescaled_BESS_PDF}) and (\ref{NormalizationCond}):

\begin{equation}
u(\tilde{a},\tilde{b})=\frac{1}{16}e^{-3\tilde{a} - \tilde{b}}\ p(\tilde{a},\tilde{b}),
\label{eq:three_term_approximation_1}
\end{equation}
where
\begin{equation}
    p(\tilde{a},\tilde{b}) = \tilde{b}^2 + (2+4e^{\tilde{a}}+4\tilde{a})b + [1+3\tilde{a}^2 + 3\tilde{a} + e^{\tilde{a}}(2+8e^{\tilde{a}} +4\tilde{a})]
    \label{eq:three_term_approximation_2}
\end{equation}

The point mass is given by:
\begin{equation}\label{eq:three_term_approximation_3}
p_0(\tilde{a})=
\frac{16 e^{3\tilde{a}}}{
    5
    + 7\tilde{a}
    + 3\tilde{a}^{2}
    + 6 e^{\tilde{a}}
    + 4\tilde{a} e^{\tilde{a}}
    + 8 e^{2\tilde{a}}
    + 16 e^{3\tilde{a}}
}.
\end{equation}

From previous impact studies, the system operator has concluded that a ramp rate limit of $r=1\%/\mathrm{min}=60\%/\mathrm{h}$ would be tolerable for new wind farms. This translates into a tolerable change in grid power per 1-min time step ($\Delta t=1/60$ h) of $a=r P_\mathrm{max}=1.5 \ \mathrm{MW/min} = 90$ MW/h. From the wind resource information submitted for the wind project developer, the system operator concludes that the characteristic scale of an equivalent Laplace distribution is given by $\beta=0.6$ $\mathrm{MW}^{-1}$, leading to the normalized tolerable slope of $\tilde{a}=0.9$ as before. It should be recalled that these expressions are universally valid, i.e., they do not depend on the characteristics of the variability at the wind or solar plant site (to the extent that higher-order effects such as a non-vanishing autocorrelation and heteroskedasticity can be neglected), since both the battery power and the critical slope have been expressed in terms of normalized variables ($\tilde b$ and $\tilde a$) only. 

Substituting the parameter values and de-normalizing the BESS power yields the following three-term ($M'=M+1 = 3$) approximation to the BESS power distribution:
\begin{equation}
    \hat{g}(b)=(0.1266+ 0.01716 b +0.00067 b^2)\ e^{-0.6\ b}.
\end{equation}

Similarly, the CDF takes the form:
\begin{align}
    G(b)=1-e^{-\beta b} \left( c_0+c_1 b+c_2 b^2\right), \\
    (c_0,c_1,c_2)=(0.2647, \ 0.0323, \ 0.0011).
    \label{eq:3-term CDF}
\end{align}

From Eq.~(\ref{eq:3-term CDF}) the percentiles shown in the first data column of Table \ref{tab:bq_power} can be calculated. It is conspicuous from the table that the 3-term approximation somewhat underestimates all $p$-values compared to the solution with $M=100$. Compared to the results of the algorithmic solution with a finite wind farm limit ($P_\mathrm{max}=150$ MW), corresponding to a wind farm capacity in natural units of $p_\mathrm{max}=\beta P_\mathrm{max}=90$, this discrepancy diminishes. If also the safety factor of $20\%$ is applied to the results, following common engineering practice, then the predicted results become similar to the results of the algorithmic simulation. It can therefore be concluded that even a 3-term approximation of the analytical solution of the BESS sizing problem provides a reasonably good first estimate for the required power capacity of a battery energy storage system for ramp rate control.

\begin{table}[h!]
\centering
\caption{$p$-values for the analytical solution with $M=2$, with \textbf{}$M=100$, and comparison with the algorithmic solution with finite capacity}
\begin{tabular}{c c c c c}
\hline
 & \multicolumn{4}{c}{Required BESS capacity [MW]} \\
\cline{2-5}
\makecell{Percentile} & \makecell{Analytical \\ ($M=2$)} & \makecell{$M=2$ w/ \\safety factor} & \makecell{Analytical \\ ($M=100$)} & \makecell{Algorithmic \\ ($P_\mathrm{max}=150$\ MW)} \\
\hline
0.90 & 2.01 & 2.41 & 2.91 & 2.79\\
0.95 & 3.42 & 4.10 & 4.62 & 4.46\\
0.99 & 6.61 & 7.93 & 8.60 & 8.31\\
\hline
\end{tabular}
\label{tab:bq_power}
\end{table}

\section{Summary and conclusions}\label{sec:summary_conclusions}

In the present work, the stationary probability density function (PDF) of a reflected random work has been derived in an analytical way. The PDF was obtained from an integral equation that was derived within the framework of modern probability theory, based on the concepts of measure theory. The results can be viewed as a generalization of the Lindley equation for a single queue with a single server. The reflected random walk can be interpreted as a battery power process driven by a ramp control requirement for a Variable Renewable Energy (VRE) plant (such as wind or solar PV). In this work, the focus was laid on the control of negative ramp rates (usually the ones of interest to power system operators), but the framework derived is valid for both positive and negative ramp control. The analytical PDF obtained was derived under the assumption of Laplace-distributed variations of the primary power source (wind or solar), which has been shown in literature to be a good first approximation to short-term variability, considering time intervals in the range of one to a few minutes and independent and identically distributed power increments. The derivation is based on an infinite Neumann series, with the normalized critical slope $\tilde{a}=a\beta$ being the only free parameter. Here, $a$ [MW] is the tolerable change in grid power per time interval and $\beta^{-1}$ [MW] is the characteristic scale of the distribution of the primary power fluctuations. The analytical solution was shown to be practically indistinguishable from both a numerical solution of the integral equation and an algorithmic solution of the reflected random walk for $M=100$ terms of the Neumann series. However, a high accuracy can be obtained even with a relatively small ($\sim 15-20$) number of terms. In those cases, the evaluation of the analytical solution is significantly faster than the numerical solution. More importantly, a few terms are sufficient to estimate the required battery power capacity, defined as the P99 value of the battery power PDF, within ten percent or less. This opens the way for designing a simple methodology for the power systems practitioner, e.g., power systems operators or regulatory agencies, for the sizing of battery storage systems for ramp control. Such formulas, possibly enhanced by suitable safety factors, could be used in sizing manuals and regulatory documents, replacing the current black-box processes with a science-based approach.

To the best knowledge of the authors, the present work is the first analytical treatment of the sizing of battery energy storage systems for ramp rate control considering the correlation between consecutive battery power values and one of the few attempts to derive general sizing criteria. The current work is limited to the calculation of the required BESS power capacity, typically identified with the inverter capacity. For the sake of brevity, a similar treatment for the required energy capacity was not included at this time. The corresponding treatment will be described in a companion paper. As far as departures from the ideal nature of the distribution of the power increments are concerned, such as a finite degree of autocorrelation and the presence of heteroskedasticity, these effects will be investigated in future work.  

\appendix

\section{Technical lemmas and extended proofs}\label{app:proofs}

\noindent The proof of Lemma \ref{LEMMA UPDATE}:

\begin{proof}
    Recalling the definition of the BESS power law, for $A\in \mathcal{B}(S)$ 
    \begin{equation}
        \Pi_{n+1}(A)=\mathbb P (B_{n+1}\in A).
    \end{equation}
    
    \noindent We may express any measurable set as a random variable using indicator functions:
    \begin{equation}
        \Pi_{n+1}(A)=\mathbb P (B_{n+1}\in A)=\mathbb E \left[\textbf{1}_{A} (B_{n+1})\right]
    \end{equation}
    
    \noindent By the Tower Property, we may write:
    \begin{equation}
        \Pi_{n+1}(A)=\mathbb E \left[\textbf{1}_{A}(B_{n+1})\right]=\mathbb E\left[\mathbb E \left[\textbf{1}_{A}(B_{n+1})|\mathcal{F}_n\right]\right]
        \label{tower}
    \end{equation}

    \noindent The expectation of the indicator can be expressed as,
    \begin{equation}
        \mathbb E\left[\mathbb E \left[\textbf{1}_{A}(B_{n+1})|\mathcal{F}_n\right]\right]=\mathbb{E}[\mathbb{P}(B_{n+1}\in A\mid \mathcal{F}_n)]
    \end{equation}
        
    \noindent By the Markov Property,
    \begin{equation}
        \mathbb{P}(B_{n+1}\in A\mid \mathcal{F}_n)=\mathbb{P}(B_{n+1}\in A\mid B_n)=\kappa(B_n,A)\quad \text{a.s.}
    \end{equation}
    
    \noindent Hence,     
    \begin{equation}
        \Pi_{n+1}(A)=\mathbb E\left[\kappa(B_n,A)\right]
    \end{equation}
    
    \noindent Since $s\rightarrow \kappa(s,A)$ is measurable and bounded, and $B_n$ has law $\Pi_n$:
    \begin{equation}
        \mathbb E\left[\kappa(B_n,A)\right]=\int_S \kappa(s,A)\ \Pi_n(\mathrm{d}s).
    \end{equation}
    
    \noindent If $S\subseteq \mathbb R$, and  $A=(-\infty, b]$, then
    \begin{equation}
        G_{n+1}(b)=\Pi_{n+1}((-\infty,b])=\int_S \kappa(s,(-\infty, b])\ \Pi_n(\mathrm{d}s),
    \end{equation}
    
    \noindent so that the CDF evolution equation is given by
    \begin{equation}
        G_{n+1}(b)=\int_S F_\kappa (b|s)\ \Pi_n(\mathrm{d}s).
    \end{equation}
\end{proof}

\noindent The proof of remark \ref{REMARKSPLIT}:

\begin{proof}
    Let $S\subseteq \mathbb R$ be a measurable state space with Borel sigma-algebra $\mathcal{B}(S)$. We assume that the probability measures $\Pi_n$ and $\kappa(s,\cdot)$ have no singular continuous part and that the atoms lie in a fixed countable set, $C\subset S$. We had defined the reference measure:
    
    \begin{equation}
        \rho:=\lambda_{S\backslash C}+ \sum _{c\in C} \delta_c. 
    \end{equation}
    
    The Lebesgue Decomposition Theorem on $\mathbb R$ states that, for any finite measure $\mu$ on $S\subseteq \mathbb R$, there exist unique measures $\mu_{a.c.}\ll \lambda$ and $\mu_s \perp \lambda$ such that $\mu=\mu_{a.c.}+\mu_s$. Furthermore, $\mu_s$ splits uniquely into an atomic part, $\mu_{a.t.}$ and a singular-continuous part, $\mu_{s.c.}$. We assume that $\mu_{s.c.}=0$ for all relevant measures $\mu\in \{\Pi_n, \kappa(s,\cdot)\}$. Hence, for each $\mu$,
    \begin{equation}
        \mu =\mu_{a.c.}+\mu_{a.t.}, \quad \mu_{a.c.}\ll \lambda, \ \mu_{a.t.}=\sum_{c\in C}\mu(\{c\})\ \delta_c. 
    \end{equation}
    
    \noindent The decomposition then becomes
    \begin{equation}
        \mu(A)=\int_{A\cap (S\backslash C)}f_\mu \ \mathrm{d}\lambda+ \sum_{c\in C}\mu(\{c\})\ \delta_c(A),
    \end{equation}
    
    \noindent where $f_\mu$ is given by the RN derivative:
    \begin{equation}
        f_\mu =\frac{\mathrm{d}\mu_{a.c.}}{\mathrm{d}\lambda}.
    \end{equation}

    \noindent Given any $A\in \mathcal B(S)$ with $\rho(A)=0$. Then, by the definition of $\rho(A)=\lambda(A\cap (S\backslash C))+\sum_{c\in C} \delta_c(A)=0$, 
    \begin{align}
        \lambda(A\cap (S\backslash C))&=0, \\
        \delta_c(A)&=0 \text{ for every } c\in C.
    \end{align}
    
    \noindent Introducing such a set into the general decomposed measure $\mu$:
    \begin{equation}
        \mu(A)=\int_{A\cap (S\backslash C)} f_\mu \mathrm{d}\lambda+ \sum _{c\in C}\mu(\{c\})\ \delta_c(A)=0
    \end{equation}
    
    \noindent since $\lambda(A\cap (S\backslash C))=0$ and $\delta_c(A)=0 \ \forall c$. Since this holds for every $A$ with $\rho(A)=0$, then $\mu \ll \rho$. Since $\mu\in \{\Pi_n, \kappa(s, \cdot)\}$ was arbitrary, $\Pi_n\ll \rho$ and $\kappa(s,\cdot) \ll \rho$,
\end{proof}

\section{Coefficient recursion}\label{app:coeff}

Let $a=\alpha \Delta t$, so that $\tilde{a}:=\beta a= \beta \alpha \Delta t$ and $\tilde{w}:=\tilde{b}+\tilde{a}$; recalling the action of the integral operator as:

\begin{equation}
    (\mathbf{\tilde{K}}g)(\tilde b)=\frac{1}{2}\left[e^{-(\tilde b+\tilde a)}\int_0^{\tilde w}e^{\tilde s}\,\tilde g(\tilde s)\,\mathrm{d}\tilde s+e^{(\tilde b+\tilde a)}\int_{\tilde w}^\infty e^{-\tilde s}\,\tilde g(\tilde s)\,\mathrm{d}\tilde s\right],
\end{equation}

Further recalling the series representation of the repeated composition of $\mathbf{\tilde{K}}$:
\begin{equation}
    (\mathbf{\tilde{K}}^n \tilde{f_X})(\tilde b)=e^{-[\tilde b+(n+1)\tilde a]}\sum_{k=0}^n \bar{\Lambda}_{n,k}\,\tilde w^{\,k}.
\end{equation}
Recalling how the first term in the Neumann series was given by $\tilde{f_X}(\tilde{b})$:

\begin{equation}
    \tilde{u}_0=\frac{1}{2}e^{-[\tilde b+\tilde a]},
\end{equation}
so that $\bar{\Lambda}_{0,0}=\frac{1}{2}$. In order to find the recursive formula for $\bar{\Lambda}_{n,k}$, we define the basis functions:
\begin{align}
    \phi_{p,k}(\tilde s):=( \tilde s + \tilde a) ^{\,k} e^{-[\tilde s+p\tilde a]}, \ p\ge1, \ k\ge 0.
\end{align}

Then, applying the operator to the basis functions is equivalent to applying each "split" action separately:
\begin{equation}
\nonumber
    (\mathbf{\tilde{K}}\phi_{p,k})(\tilde b)=\frac{1}{2}\left[e^{-(\tilde b+\tilde a)}\int_0^{\tilde w}e^{\tilde s}\,(\tilde s + \tilde a )^{\,k} e^{-[\tilde s+p\tilde a]}\ \mathrm{d}\tilde s+e^{(\tilde b+\tilde a)}\int_{\tilde w}^\infty e^{-\tilde s}\,(\tilde s + \tilde a )^{\,k} e^{-[\tilde s+p\tilde a]}\ \mathrm{d}\tilde s\right],
\end{equation}

can be separated into:
\begin{equation}
    (\mathbf{\tilde{K}}^{(L)}\phi_{p,k})(\tilde b)=\frac{1}{2}e^{-(\tilde b+\tilde a)}\int_0^{\tilde w}e^{\tilde s}(\tilde s+\tilde a)^k e^{-[\tilde s+p\tilde a]}\ \mathrm{d}\tilde s=\frac{1}{2}e^{-(\tilde b+\tilde a)}L_k,
\end{equation}

\begin{equation}
    (\mathbf{\tilde{K}}^{(U)}\phi_{p,k})(\tilde b)=\frac{1}{2}e^{(\tilde b+\tilde a)}\int_{\tilde w}^\infty e^{-\tilde s}(\tilde s+\tilde a)^k e^{-[\tilde s+p\tilde a]}\ \mathrm{d}\tilde s=\frac{1}{2}e^{(\tilde b+\tilde a)}U_k.
\end{equation}

\noindent First, the integral $U_k$ will be explored in detail. 
\begin{align}
    U_k=e^{-p\tilde a}\int_{\tilde w}^\infty (\tilde s+\tilde a)^k e^{-2\tilde s}\ \mathrm{d}\tilde s=e^{-p\tilde a}I_k
\end{align}

\noindent Setting $u=\tilde{s}+\tilde{a}$, then $\tilde{s}=u-\tilde{a}$:
\begin{equation}
    I_k=\int_{\tilde w+\tilde a}^\infty u^k e^{-2(u-\tilde a)}\ \mathrm{d}u=e^{2\tilde a}\int_{\tilde w+\tilde a}^\infty u^k e^{-2u}\ \mathrm{d}u.
\end{equation}

\noindent We further have:
\begin{equation}
    I_k=e^{2\tilde a}\int_{L}^\infty u^k e^{-\lambda u}\ \mathrm{d}u=e^{2\tilde a}S_k,\quad L=\tilde w+\tilde a,\ \lambda=2.
\end{equation}

Finally, the integral $S_k$ can be solved using the incomplete gamma function, defined as $\Gamma(s,x)=\int_x^\infty t^{s-1} e^{-t}dt$. Defining $\zeta=\lambda u$, $d\zeta=\lambda du$:
\begin{equation}
    S_k=\int_{\lambda L}^\infty \left(\frac{\zeta}{\lambda}\right)^k \lambda e^{-\zeta}\ \mathrm{d}\zeta=\lambda^{-(k+1)}\int_{\lambda L}^\infty \zeta^k e^{-\zeta}\ \mathrm{d}\zeta. 
\end{equation}

\noindent Therefore, the solution to $S_k$ is simply:
\begin{equation}
    S_k=\lambda^{-(k+1)}\Gamma(k+1, \lambda L).
\end{equation}

\noindent Tracking back the value of $I_k$ and substituting $\lambda=2$, $L=\tilde{w}+\tilde{a}$:
\begin{equation}
    I_k=e^{2\tilde a}\left(\frac{1}{2}\right)^{k+1} \Gamma\!\left[k+1,\,2(\tilde w+\tilde a)\right].
\end{equation}

The upper incomplete gamma function satisfies the following series representation:
\begin{equation}
    \Gamma(s,x)=(s-1)!e^{-x} \sum_{j=0}^{s-1} \frac{x^j}{j!},
\end{equation}

\noindent if $s$ is a positive integer. In our case, $k\ge0$ is a positive integer, so that:
\begin{equation}
    \nonumber
    I_k=e^{2\tilde a}\left(\frac{1}{2}\right)^{k+1}k! \,e^{-2(\tilde w+\tilde a)} \sum_{m=0}^k \frac{(\tilde w+\tilde a)^m}{m!}\,2^{\,m}.
\end{equation}

\noindent Simplifying the equation above yields:
\begin{align}
    I_k=e^{-2\tilde w}\sum_{m=0}^k \frac{k!}{m!}(\tilde w+\tilde a)^m \left(\frac{1}{2}\right)^{k-m+1}.
\end{align}

\noindent Tracing back the value of $U_k$ leads to:
\begin{align}
\nonumber
    U_k=e^{-p\tilde a}e^{-2\tilde w}\sum_{m=0}^k \frac{k!}{m!}(\tilde w+\tilde a)^m \left(\frac{1}{2}\right)^{k-m+1},
\end{align}

\noindent where the complete solution to $(\mathbf{\tilde{K}}^{(U)}\phi_{p,k})(\tilde{b})$, with $\tilde{w}=\tilde{b}+\tilde{a}$ is given by:
\begin{align}
\nonumber
    (\mathbf{\tilde{K}}^{(U)}\phi_{p,k})(\tilde b)=\frac{1}{2}e^{-p\tilde a}e^{-(\tilde b+\tilde a)}\sum_{m=0}^k \frac{k!}{m!}(\tilde w+\tilde a)^m \left(\frac{1}{2}\right)^{k-m+1},
\end{align}

\noindent Simplifying the expression:
\begin{align}
\nonumber
    (\mathbf{\tilde{K}}^{(U)}\phi_{p,k})(\tilde b)=\frac{1}{2}e^{-(\tilde b+(p+1)\tilde a)}\sum_{m=0}^k \frac{k!}{m!}(\tilde w+\tilde a)^m \left(\frac{1}{2}\right)^{k-m+1}.
\end{align}

Recalling the Binomial Theorem, the polynomial $(\tilde{w}+\tilde{a})^m$ can be expanded as:
\begin{align}
    (\tilde w+\tilde a)^m=\sum_{r=0}^m\begin{pmatrix}
        m \\ r
    \end{pmatrix} \tilde a^{\,m-r}\,\tilde w^{\,r}.
\end{align}

\noindent Inserting this expansion into the previous expression results in:
\begin{align}
\nonumber
    (\mathbf{\tilde{K}}^{(U)}\phi_{p,k})(\tilde b)=\frac{1}{2}e^{-(\tilde b+(p+1)\tilde a)}\sum_{m=0}^k \frac{k!}{m!}\left[\sum_{r=0}^m\begin{pmatrix}
        m \\ r
    \end{pmatrix} \tilde a^{\,m-r}\,\tilde w^{\,r}\right] \left(\frac{1}{2}\right)^{k-m+1}.
\end{align}

\noindent Finally, the coefficients for the upper contribution can be found:
\begin{align}
\nonumber
    (\mathbf{\tilde{K}}^{(U)}\phi_{p,k})(\tilde b)=e^{-(\tilde b+(p+1)\tilde a)}\sum_{r=0}^k \left[\sum_{m=r}^k \frac{1}{2}\frac{k!}{m!} \begin{pmatrix}
        m \\ r
    \end{pmatrix} \tilde a^{\,m-r} \left(\frac{1}{2}\right)^{k-m+1}\right] \tilde w^{\,r}.
\end{align}

\noindent So that the coefficients are expressed by:
\begin{align}
\nonumber
   U_{k\rightarrow r}=\sum_{m=r}^k \frac{1}{2}\frac{k!}{m!} \begin{pmatrix}
        m \\ r
    \end{pmatrix} \tilde a^{\,m-r} \left(\frac{1}{2}\right)^{k-m+1},
\end{align}

where $r=0,1,\cdots, k$. Now returning to the lower coefficients' contribution:
\begin{equation}
    (\mathbf{\tilde{K}}^{(L)}\phi_{p,k})(\tilde b)=\frac{1}{2}e^{-(\tilde b+\tilde a)}\int_0^{\tilde w}e^{\tilde s}(\tilde s+\tilde a)^k e^{-[\tilde s+p\tilde a]}\ \mathrm{d}\tilde s=\frac{1}{2}e^{-(\tilde b+\tilde a)}L_k,
\end{equation}

\noindent where $L_k$ is given by:
\begin{equation}
    L_k =\int_0^{\tilde w}(\tilde s+\tilde a)^k e^{- p\tilde a}\ \mathrm{d}\tilde s=e^{- p\tilde a}\frac{([\tilde w+\tilde a]^{k+1}-\tilde a^{\,k+1})}{k+1},
\end{equation}

\noindent so that the complete lower piece solution is given by:
\begin{align}
    ({K}^{(L)}\phi_{p,k})(\tilde b)=\frac{1}{2} e^{-[\tilde b+(p+1)\tilde a]}\frac{([\tilde w+\tilde a]^{k+1}-\tilde a^{\,k+1})}{k+1}.
\end{align}

\noindent Applying the binomial expansion for $(\tilde{w}+\tilde{a})^{k+1}-\tilde{a}^{k+1}$, it is possible to express the aforementioned expression as:
\begin{align}
    (\mathbf{\tilde{K}}^{(L)}\phi_{p,k})(\tilde b)=e^{-[\tilde b+(p+1)\tilde a]}\sum_{r=1}^{k+1}\frac{1}{2}\frac{1}{k+1} \begin{pmatrix}
        k+1 \\ r
    \end{pmatrix} \tilde a^{\,k+1-r} \tilde w^{\,r},
\end{align}

\noindent so that the coefficient contribution for the lower piece is given by:
\begin{align}
    L_{k\rightarrow r}=\frac{1}{2}\frac{1}{k+1} \begin{pmatrix}
        k+1 \\ r
    \end{pmatrix} \tilde a^{\,k+1-r},
\end{align}

\noindent for $r=1,\cdots, k+1$ and where $L_{k\rightarrow 0}=0$.

\noindent Finally, the coefficients are found to be:
\begin{align}
\bar{\Lambda}_{n+1,r}=\sum_{k=0}^n \bar{\Lambda}_{n,k}(L_{k\rightarrow r} + U_{k\rightarrow r}).
\end{align}
\noindent where $r=0,1,\dots,n+1$ (with $L_{k\to 0}=0$ and $U_{k\to (k+1)}:=0$ by convention).

\section{Point mass series}\label{app:point}

\noindent Recalling the expression for the point mass:
\begin{equation}
    p_0 \;=\; \frac{1}{\,1+\int_{0}^{\infty}\sum_{n=0}^{\infty}
        e^{-(\tilde v+(n+1)\tilde a)}
        \sum_{k=0}^{n}\bar{\Lambda}_{n,k}\,(\tilde v+\tilde a)^{k}\,d\tilde v\,}
    \;=\; \frac{1}{1+\tilde\Omega}.
\end{equation}

\noindent It is possible to simplify the following integral expression
\begin{equation}
    \tilde\Omega \;=\; \int_{0}^{\infty}\sum_{n=0}^{\infty}
        e^{-(\tilde b+(n+1)\tilde a)}
        \sum_{k=0}^{n}\bar{\Lambda}_{n,k}\,(\tilde b+\tilde a)^{k}\,\mathrm{d}\tilde b.
\end{equation}

\noindent In this case, it is possible to swap the integral and summations due to absolute convergence in the $L^1$ space :
\begin{equation}
\nonumber
    \tilde\Omega \;=\; \sum_{n=0}^{\infty} e^{-(n+1)\tilde a}
        \sum_{k=0}^{n}\bar{\Lambda}_{n,k}
        \int_{0}^{\infty} e^{-\tilde b}\,(\tilde b+\tilde a)^{k}\, \mathrm{d}\tilde b
    \;=\; \sum_{n=0}^{\infty} e^{-(n+1)\tilde a}
        \sum_{k=0}^{n}\bar{\Lambda}_{n,k}\,\tilde H_k .
\end{equation}

\noindent Now representing the integral $H_k$ as a sum by use of the binomial theorem:
\begin{equation}
    (\tilde b+\tilde a)^{k}\;=\;\sum_{j=0}^{k}\binom{k}{j}\,\tilde a^{\,k-j}\,\tilde b^{\,j}.
\end{equation}

\noindent Substituting the expansion into the integral:
\begin{equation}
    \tilde H_k \;=\; \int_{0}^{\infty} e^{-\tilde b}
        \sum_{j=0}^{k}\binom{k}{j}\,\tilde a^{\,k-j}\,\tilde b^{\,j}\, \mathrm{d}\tilde b .
\end{equation}

\noindent Finally, taking the summation outside the integral:
\begin{equation}
    \tilde H_k \;=\; \sum_{j=0}^{k}\binom{k}{j}\,\tilde a^{\,k-j}
        \int_{0}^{\infty} e^{-\tilde b}\,\tilde b^{\,j}\, \mathrm{d}\tilde b
    \;=\; \sum_{j=0}^{k}\binom{k}{j}\,\tilde a^{\,k-j}\, j!.
\end{equation}

\noindent Therefore, the integral expression $\Omega$ can be expressed in terms of infinite series:
\begin{equation}
    \tilde\Omega \;=\; \sum_{n=0}^{\infty} e^{-(n+1)\tilde a}
        \sum_{k=0}^{n}\bar{\Lambda}_{n,k}
        \sum_{j=0}^{k}\binom{k}{j}\,\tilde a^{\,k-j}\, j!.
        \label{ANALYTICAL:OMEGA}
\end{equation}

\section{Numerical solution to the integral equation}\label{app: Nystrom}

A common numerical method used in the solution of integral equations is the \textbf{Nyström Method}. In this method, the continuous problem is broken into discrete intervals, and the integral is replaced with a weighted sum. The kernel follows the familiar Toeplitz structure. A discrete Toeplitz matrix is an $N\times N$ matrix that is defined as:
\[A_{ij}=c_{i-j},\]
for constants $c_{1-n}\cdots, c_{n-1}$ and total number of elements $N$. This provides the Toeplitz matrix its characteristic diagonal structure. The discrete convolution operation can be constructed as a matrix multiplication problem, where one of the inputs is converted into a Toeplitz matrix. Whenever there exists a bi-infinite Toeplitz matrix, where the index sets of the constants are infinite in nature (for example, the naturals), then the matrix induces a linear operator. The linear operator is known as a Toeplitz operator. 

In our case, the kernel only depends on the difference $b-s$. Because of this, $\textbf{K}$ is a Toeplitz operator on the half-line. This fact is clear when the problem is discretized for a numerical solution, approximating the operator in matrix form. Defining the discrete sequences $b_i=ih$ where $h=B_{max}/N$ and $i=0, \cdots, N$. Further computing trapezoid weights for quadrature:

\begin{align}
    w=\left(\frac{1}{2}, 1\cdots, 1, \frac{1}{2}\right).
\end{align}

This will turn every integral to a weighted sum such that:

\begin{align}
    \int_0^{B_{max}}\cdot \ \mathrm{d}s = h \sum _j w_j(\cdot)_j,
\end{align}

where $B_{max}$ is the upper limit that should numerically be representative of infinity. The integral operator, \textbf{K}, can be constructed by use of this trapezoid quadrature; the Toeplitz structure given by $b-s$ should be recovered by defining a difference grid. The difference grid can be calculated by performing subtractions between values in $b_i$ (the battery value discrete grid):
\begin{align}
    D_{ij}=(i-j)h,
\end{align}

where we recall that $i$ and $j$ range from $0$ to $N$. Then, the \textbf{K} operator can be represented by:
\begin{align}
    (\textbf{K}g)_i\approx(\textbf{K}^{N}g)_i=h\sum_{j=0}^Nf\left((i-j)h\right)\ w_j g_j,
\end{align}

where $f$ is the given form of the PDF, $f_X$. One may define the vector form of $f_X(b)$ by evaluating it using the discrete grid:
\begin{align}
    \vec{f}=f_X(b_i).
\end{align}

Then, the linear system defined earlier can be solved using a numerical resolvent:
\begin{align}
    (\textbf{I}-\textbf{K}^{N})\vec{u}=\vec{f}, \\
    \vec{u}\approx (\textbf{I}-\textbf{K}^{N})^{-1}\vec{f}.   
\end{align}

The value of $p_0$ should be determined using the same quadrature and the normalization condition. Discretely, the integral in Eq.~\eqref{NormalizationCond} can be represented using the quadrature:
\begin{align}
    \Omega^{(N)}=h\sum_jw_ju_j,
\end{align}

where $u_j$ are the elements of $\vec{u}$. Then, the point mass is given by:
\begin{align}
    p_0\approx\frac{1}{1+\Omega^{(N)}}.
\end{align}

\begin{align}
    \vec{u}^{(k+1)}=\vec{f}+\textbf{K}^N\vec{u}^{(k)},
\end{align}

where $k=1,\cdots,K$ and $K$ is the total number of iterations. Therefore, the Piccard method is implementing the approximation:
\begin{align}
    \vec{u}\approx \sum_{n=0}^K \vec{f}^{*(n+1)}=\vec{f}+\vec{f}*\vec{f}+\vec{f}*\vec{f}*\vec{f}+\cdots.
\end{align}

\newpage

\begingroup
\hypersetup{allcolors=black}
\bibliographystyle{apsrev4-2} 
\bibliography{references}
\endgroup

\end{document}